\documentclass[preprint,1p,times]{elsarticle}
\usepackage{geometry}

\usepackage{graphicx} % Required for inserting images
\usepackage{chemformula} % Formula subscripts using \ch{}
\usepackage[T1]{fontenc} % Use modern font encodings

\usepackage{natbib}
\usepackage{amsmath,amsthm,amssymb}
\usepackage{float}
\usepackage{graphicx}% Include figure files
\usepackage{caption}
\usepackage{subfigure}
\usepackage{tabularx}% Include figure files
\usepackage{threeparttable}
\usepackage{dcolumn}% Align table columns on decimal point
\usepackage{bm}% bold math
\usepackage{color,epsfig,multirow}
\usepackage{array}
\usepackage{hyperref}
\usepackage{algpseudocode}%[noend]
\usepackage{algorithmicx,enumerate,algorithm}
\usepackage{booktabs}
\usepackage{threeparttable}
\usepackage{makecell}

\DeclareMathOperator{\Kn}{Kn}

\newcommand{\cO}{\mathcal{O}}

\newtheorem{theorem}{Theorem}

\theoremstyle{definition}

\theoremstyle{remark}
\newtheorem{remark}[theorem]{Remark}
\newtheorem{proposition}{Proposition}

\allowdisplaybreaks[4]
\begin{document}

\begin{frontmatter}

\title{From molecular dynamics to kinetic models: data-driven generalized collision operators in 1D3V plasmas} %% Article title
% Generalized numerical simulation of 1D3V plasma systems with data-driven approach
\author[1]{Yue Zhao}
\ead{zhaoyu14@msu.edu}
\address[1]{{Department of Computational Mathematics, Science \& Engineering, Michigan State University}, 428 S Shaw Ln, East Lansing 48824, MI, USA}
\author[3]{Guosheng Fu}
\ead{gfu@nd.edu}
\address[3]{{Department of Applied and Computational Mathematics and Statistics (ACMS), University of Notre Dame}, Notre Dame 46556, IN, USA}
\author[1,2]{Huan Lei}
\ead{leihuan@msu.edu}
\address[2]{{Department of Statistics \& Probability, Michigan State University}, 619 Red Cedar Road Wells Hall, East Lansing 48824, MI, USA}

% \date{\today}

%\affiliation{organization={},%Department and Organization
%	addressline={}, 
%	city={},
%	postcode={}, 
%	state={},
%	country={}}

%% Abstract
\begin{abstract}
We present a data-driven approach for constructing generalized collisional kinetic models for inhomogeneous plasmas in one-dimensional physical space and three-dimensional velocity space (1D–3V). The collision operator is directly learned from micro-scale molecular dynamics (MD)
and accurately accounts for the unresolved particle interactions over a broad range of plasma conditions. Unlike the standard Landau operator, the present operator takes an anisotropic, non-stationary form that captures the heterogeneous collisional energy transfer arising from the many-body interactions, which is crucial for plasma kinetics beyond the weakly coupled regime.  Efficient numerical evaluation is achieved through a low-rank tensor representation with $\cO(N \log N)$ computational complexity. The constructed kinetic equation strictly preserves conservation laws and physical constraints and therefore, enables us to develop an explicit second-order, energy-conserving scheme that ensures fully discrete conservation of mass and total energy. Numerical results demonstrate that the present model accurately predicts both transport coefficients and several 1D-3V kinetic processes compared with MD simulations across a broad range of densities and temperatures in spatially inhomogeneous settings.  This work provides a systematic pathway for bridging micro-scale MD and inhomogeneous plasma kinetic descriptions where empirical models show limitation.

% Building on molecular dynamics (MD) simulations, we extend a previously developed fast spectral separation approach to incorporate explicit dependence of the collision operator on local density and temperature, enabling accurate modeling beyond the weakly coupled regime.
% The resulting collision operator features an anisotropic, non-stationary kernel that adapts to evolving macroscopic states. 
% Efficient numerical evaluation is achieved through a low-rank tensor product representation combined with a Fourier spectral discretization, leading to an $\cO(N \log N)$ algorithm suitable for high-dimensional kinetic simulations. 

% The advection term is treated using explicit second-order, energy-conserving schemes to ensure fully discrete conservation of mass and total energy, while maintaining long-term stable simulations.
% % over a long period of time.
% % We adopt explicit second-order energy-conserving numerical schemes for the advection term, preserving fully discrete mass and total energy conservation.
% % The proposed scheme preserves fundamental physical properties, including conservation laws and a discrete H-theorem.
% Numerical experiments demonstrate that the proposed model accurately reproduces collisional relaxation dynamics observed in MD simulations across a range of densities and temperatures in spatially inhomogeneous settings, while maintaining high computational efficiency.
% This work provides a systematic pathway for bridging MD and kinetic modeling in inhomogeneous plasma systems.
\end{abstract}

%%Graphical abstract
%\begin{graphicalabstract}
%\includegraphics{grabs}
%\end{graphicalabstract}

%%Research highlights
%\begin{highlights}
%\item Research highlight 1
%\item Research highlight 2
%\end{highlights}

%% Keywords
\begin{keyword}
Kinetic equation, collision operator, structure-preserving, 1D3V plasma kinetics, data-driven modeling.
\end{keyword}

\end{frontmatter}

% \maketitle

\section{Introduction}

Collisional kinetic theory provides a fundamental framework for modeling the evolution of particle distribution functions in plasmas and rarefied gases, bridging microscopic particle interactions and macroscopic transport phenomena. 
In spatially inhomogeneous settings, the kinetic description typically consists of a Vlasov-type transport term coupled with a collision operator that accounts for unresolved particle interactions beyond the mean-field approximation. 
Among various models, the Landau collision operator \cite{landau1937kinetic} has been widely adopted for plasmas with long-range Coulomb interactions, owing to its rigorous conservation properties and consistency with the H-theorem. 
It can be formally derived from the Boltzmann collision operator \cite{boltzmann1872weitere, wild1951boltzmann} in the grazing collision limit or from the Balescu–Lenard equation \cite{lenard1960bogoliubov, balescu1960irreversible} under appropriate dielectric screening assumptions.

In practical plasma simulations, spatial inhomogeneity plays a crucial role in determining transport processes, wave–particle interactions, and relaxation dynamics. 
As a result, kinetic equations posed in one-dimensional physical space and three-dimensional velocity space (1D–3V) have become a standard modeling paradigm in plasma physics \cite{cheng1976integration}. 
Such models retain the essential velocity-space physics while allowing for nontrivial spatial transport, and have been extensively used in applications ranging from fusion plasmas to space and astrophysical systems. 
However, incorporating collisions in 1D–3V kinetic models significantly increases both analytical and computational complexity, as the collision operator must be evaluated locally in physical space while preserving its intrinsic high-dimensional structure in velocity space.
% resulting in substantial computational cost and complicated numerical coupling between transport and collisional dynamics

The numerical solution of collisional kinetic equations in 1D–3V phase space presents substantial challenges. Even for the classical Landau operator, the collision term involves a nonlinear, three-dimensional integro-differential operator evaluated at every spatial grid point, leading to extremely high computational cost. 
Considerable efforts have been devoted to developing efficient numerical methods for collisional kinetic equations in spatially inhomogeneous settings. 
Early approaches include finite-difference and finite-volume discretizations combined with conservative formulations to preserve mass, momentum, and energy \cite{chang1970practical,larsen1985discretization}. 
Entropy-based and structure-preserving schemes have been further developed to ensure discrete consistency with the H-theorem \cite{buet1999numerical,degond1994entropy}.

To alleviate the computational burden, various fast algorithms have been proposed for evaluating the Landau collision operator, including Fourier spectral methods that exploit the convolution structure of the isotropic Landau kernel \cite{pareschi2000fast,filbet2002numerical,zhang2017conservative}, multigrid and multipole-based approaches \cite{buet1997fast,lemou1998multipole},  Hermite spectral based methods \cite{wang2019approximation,li2020approximation,li2021hermite}, as well as formulations based on Rosenbluth potentials \cite{Rosenbluth_PR_1957} coupled with Poisson solvers \cite{chacon2000implicit,taitano2016adaptive}. Particle-based and hybrid methods have also been explored to improve scalability and robustness in high-dimensional settings \cite{takizuka1977binary,Carrillo_JCP_2020,bailo2024collisional,Hauck_Hu_JSP_2025}. 
To address the stiffness induced by disparate temporal scales between transport and collisions, implicit and asymptotic-preserving schemes have been proposed \cite{epperlein1994implicit,chacon2000implicit,lemou2005implicit,filbet2010class,jin2011class}. 
Despite these advances, the evaluation of the collision operator remains a dominant computational bottleneck in large-scale 1D–3V kinetic simulations.

Beyond computational considerations, a more fundamental limitation arises from the modeling assumptions underlying the Landau collision operator. 
The Landau formulation is valid in the weakly coupled regime, where binary interactions are dominated by small-angle scattering and particle correlations are negligible.  However, in many physically relevant scenarios, such as dense laboratory plasmas, warm dense matter, and inertial confinement fusion, particle correlations and collective effects become increasingly important. 
In these regimes, the Landau operator may fail to accurately capture collisional relaxation and energy transfer processes, even if it is evaluated with high numerical fidelity. This modeling deficiency poses a significant challenge for spatially inhomogeneous kinetic simulations, where local thermodynamic conditions can vary strongly in space and time.

To overcome these limitations, data-driven approaches have recently emerged as a promising pathway for constructing effective collision models informed by microscopic particle dynamics. 
Molecular dynamics (MD) simulations provide a first-principles description of interparticle interactions beyond weak-coupling assumptions, offering a natural foundation for learning generalized collisional behavior \cite{Stanton_Murillo_PRE_2016,baalrud2013effective}. 
In our previous work \cite{zhao2025data}, we introduced a data-driven collision operator for one-component plasmas that generalizes the classical Landau formulation beyond the weakly coupled regime. 
The operator is characterized by a non-stationary and symmetry-breaking collision kernel that captures correlation-induced anisotropic collisional effects absent from the classical Landau formulation. 
This generalized collision kernel was shown to accurately model moderately coupled plasma dynamics in spatially homogeneous (0D–3V) settings.
% preserves frame indifference, conservation laws, and entropy dissipation
In a subsequent study \cite{zhao2025fast}, we further developed a fast spectral separation representation of the generalized collision kernel, which exploits a low-rank tensor decomposition to recover a convolution structure and enables efficient evaluation with $\mathcal{O}(N\log N)$ computational complexity.
However, both the generalized collision model and its spectral separation formulation were restricted to spatially homogeneous 0D–3V systems, and show limitations to model spatially inhomogeneous collisional kinetics.

% collision kernel that depends not only on the relative velocity of colliding particles, but also on their average velocity
% adopts a metriplectic structure \cite{morrison1986paradigm}
% The anisotropic and non-stationary nature of the learned collision kernel breaks the convolution structure exploited by many fast algorithms for the Landau operator, rendering traditional Fourier-based or Rosenbluth-potential-based accelerations inapplicable.

In this work, we aim to extend the data-driven collision framework to spatially inhomogeneous 1D–3V kinetic systems within a general metriplectic formulation \cite{morrison1986paradigm,grmela1997dynamics}.
In contrast to the homogeneous case, the 1D–3V Vlasov–collision system involves nontrivial coupling between the spatial transport under a self-consistent electric field and the collisional process, which requires a careful balance between Hamiltonian transport and dissipative dynamics at both continuum and discrete levels.
From the metriplectic perspective, this extension introduces additional structural challenges: the collision operator must be consistently coupled with the Vlasov dynamics while preserving conservation laws, entropy dissipation, and compatibility with spatial inhomogeneity. 
In particular, the collision process further depends on the local plasma conditions that introduce spatially heterogeneous dissipative mechanisms. 
To capture this effect, we construct a generalized collisional kernel directly learned from the micro-scale MD simulations that explicitly captures the dependence on the local macroscopic moments of the distribution function. 
The developed model provides a consistent 1D–3V kinetic description of collisional plasma dynamics over a broad range of physical regimes. 
Moreover, the constructed model strictly preserves physical constraints and frame-indifference symmetry conditions, and therefore, enables the development of a structure-preserving discretization of the collisional kinetic equation.  
Numerical experiments demonstrate the accuracy of the constructed model for plasma in moderately coupled regimes (i.e., Coulomb coupling parameter $\Gamma \sim \mathcal{O}(1)$) through a direct comparison with the full MD simulations, where the Landau model shows limitations. 
Furthermore, the fully discrete numerical scheme can strictly guarantee the conservation of mass and energy.
% The developed model provides a structure-preserving 1D–3V kinetic description that couples the Vlasov equation with a generalized collision operator in a unified metriplectic form, providing a systematic foundation for modeling collisional plasma dynamics beyond the weakly coupled regime in spatially inhomogeneous configurations.
% Numerical experiments demonstrate that the present model significantly improves upon the classical Landau equation in reproducing MD-observed collisional dynamics in spatially inhomogeneous settings, while maintaining computational efficiency compatible with large-scale kinetic simulations.
% yields improved agreement with MD results compared to the classical Landau model.

The rest of this paper is organized as follows. 
Section \ref{sec:method} introduces the generalized collision operator with macroscopic state dependence, the spectral separation representation, and the fast evaluation strategy for 1D–3V systems. 
Section \ref{sec:numerical} presents numerical results illustrating the accuracy, efficiency, and structure-preserving properties of the proposed method. Summary and concluding remarks are given in Section~\ref{sec:summary}.

\section{Methods}\label{sec:method}

We consider the Vlasov-Amp\`ere-collision (VAC) model of the evolution of the ion probability distribution function with uniform background electrons, involving the advection and collision parts with zero magnetic field as follows,
\begin{equation}
    \dfrac{\partial f}{\partial t}(\bm{x},\bm{v},t) + \bm{v} \cdot \nabla_{\bm{x}} f + \dfrac{q_{0}}{m_{0}} \bm{E}(\bm{x}) \cdot \nabla_{\bm{v}} f  = C[f] , \quad \dfrac{\partial \bm{E}}{\partial t} = -\bm{J}/\epsilon_{0}, \quad \bm{J}(\bm{x}) = q_{0} \int f\bm{v} ~ \mathrm{d}^{3} \bm{v} ,
\end{equation}
% $\nabla_{\bm{x}} \cdot \bm{E} = \dfrac{\rho - \bar{\rho}}{\epsilon_{0}}$
% $\rho(\bm{x}) = q_{0} \int f ~ \mathrm{d}^{3} \bm{v} , \quad \bar{\rho} = \int \rho ~ \mathrm{d} \bm{x} / V$
where $f(\bm{x},\bm{v},t): \mathbb{R}^{3} \times \mathbb{R}^{3} \times \mathbb{R}_{+} \to \mathbb{R}_{+}$ denotes the single-particle distribution function in the phase space driven by the advection term on the left and the collision part $C[f]$ on the right, $\bm{E}(\bm{x},t)$ is the self-consistent electric field and $\bm{J}$ is the current density.

The VAC model is equivalent to the Vlasov-Poisson-collision model when the charge continuity equation $\rho_{t} + \nabla_{\bm{x}}\cdot \bm{J} = 0$ holds, as the electric field satisfies the Poisson equation $\nabla_{\bm{x}} \cdot \bm{E} = \frac{\rho - \bar{\rho}}{\epsilon_{0}}$.
In both the Amp\`ere and Poisson models, the total energy $\mathcal{E} = \int \frac{1}{2} m |\bm{v}|^{2} f \mathrm{d} \bm{x} \mathrm{d}^{3} \bm{v} + \int \frac{\epsilon_{0}}{2} |\bm{E}|^{2} \mathrm{d} \bm{x}$ is conserved.
The collision operator $C[f]$ represents the dissipative component of the metriplectic dynamics and models unresolved particle interactions beyond the mean-field approximation.
Throughout this work, all particles are assumed to have identical mass $m_{0}$ and charge $q_{0}$, particles interact via an unscreened Coulomb interaction, and the electrons are assumed to be uniformly distributed in space with density $\bar{\rho}$.
% $C[f]$ represents the unresolved particle interactions beyond the mean field approximation. 
% and $(\bm{x},\bm{v})$ represents a phase space point with position.  $\phi(\bm{x} ;f)=\int V(\bm{x},\bm{x}')f(\bm{z}')\mathrm{d}\bm{z}'$ is the mean field potential energy with pair potential $V(\bm{x}, \bm{x}')$.

We introduce characteristic length $L_{0}$, velocity $V_{0}$, particle mass $m_{0}$, charge $q_{0}$, number density per unit of space $n_{0}$, probability density $f_{0}=n_{0}/V_{0}^{3}$, energy $\mathcal{E}_{0} = m_{0} n_{0} L_{0} V_{0}^{2}$, electric field $E_{0} = m_{0} V_{0}^{2} / q_{0} L_{0}$, temperature $T_{0} = m_{0} V_{0}^{2} / k_{B}$, and $\lambda_{D} = [\epsilon_{0} m_{0} V_{0}^{2} / (n_{0} q_{0}^{2} L_{0}^{2})]^{1/2}$ in \ref{app:phys}.
Define $\tilde{\bm{x}} = \bm{x} / L_{0}$, $\tilde{\bm{v}} = \bm{v}/V_{0}$, $\tilde{t} = t V_{0} / L_{0}$, $\tilde{\bm{E}} = \bm{E}/E_{0}$, and $\tilde{f} = f / f_{0}$.
In the rest of the paper, we will drop all the tildes and use the dimensionless equations
\begin{equation}\label{eq:VAC}
    \left\{
    \begin{aligned}
        &\dfrac{\partial f}{\partial t} + \bm{v} \cdot \nabla_{x} f + \bm{E} \cdot \nabla_{\bm{v}} f  = C[f] , \\ % \left.\dfrac{\partial f}{\partial t}\right|_{c}
        &\dfrac{\partial \bm{E}}{\partial t} = - \lambda_{D}^{-2} \bm{J}, \quad \bm{J}(\bm{x}) = \int f\bm{v} ~ \mathrm{d}^{3} \bm{v} ,
    \end{aligned}
    \right.
\end{equation}
where the conserved energy is $\mathcal{E} = \int \frac{1}{2} |\bm{v}|^{2} f \mathrm{d} \bm{x} \mathrm{d}^{3} \bm{v} + \int \frac{\lambda_{D}^{2}}{2} |\bm{E}|^{2} \mathrm{d} \bm{x}$.

The dimensionless VAC system \eqref{eq:VAC} admits a metriplectic structure \cite{morrison1986paradigm} with state variable $\bm z=(f(\bm{x},\bm{v},t), \bm{E}(\bm{x},t))$, where the evolution is decomposed into a Hamiltonian transport part and a dissipative collisional part. Specifically, the Hamiltonian dynamics describes collisionless phase-space transport under a self-consistent field, while the collision operator accounts for irreversible effects arising from unresolved particle interactions. 
For an observable $\mathcal{A}[\bm z]$, the evolution takes the form
\begin{equation}
\dot{\mathcal{A}} = \left\{\mathcal{A}, \mathcal{H}\right\} + \left(\mathcal{A}, \mathcal{S}\right)_+
\label{eq:metriplectic_form}
\end{equation}
where $\mathcal{H}$ is the Hamiltonian functional 
\begin{equation*}
    \mathcal{H}[\bm z] = \int \dfrac{|\bm{v}|^{2}}{2} f(\bm{x},\bm{v}) \mathrm{d} \bm{x} \mathrm{d} \bm{v} + \dfrac{\lambda_{D}^{2}}{2} \int |\bm{E}(\bm{x})|^{2} \mathrm{d} \bm{x},
\label{eq:energy}
\end{equation*}
paired with the Poisson bracket
\begin{equation}\label{eq:poisson_bracket}
\begin{aligned}
    \{ \mathcal{F} , \mathcal{G} \} =& \int f \left( \nabla_{\bm{x}}\dfrac{\delta \mathcal{F}}{\delta f} \cdot \nabla_{\bm{v}}\dfrac{\delta \mathcal{G}}{\delta f}  - \nabla_{\bm{v}}\dfrac{\delta \mathcal{F}}{\delta f} \cdot \nabla_{\bm{x}}\dfrac{\delta \mathcal{G}}{\delta f} \right) \mathrm{d}\bm{x} \mathrm{d} \bm{v} \\
    &+ \lambda_{D}^{-2} \int f \left( \nabla_{\bm{v}} \dfrac{\delta \mathcal{F}}{\delta f} \cdot \dfrac{\delta \mathcal{G}}{\delta \bm{E}} - \nabla_{\bm{v}} \dfrac{\delta \mathcal{G}}{\delta f} \cdot \dfrac{\delta \mathcal{F}}{\delta \bm{E}} \right)
    \mathrm{d}\bm{x} \mathrm{d} \bm{v}  ,
\end{aligned}
\end{equation}
to generate the conservative transport and the Amp\`ere equation. The collisional process is generated by an entropy functional
\begin{equation}
    \mathcal{S}[f] = -\int f \ln f \mathrm{d} \bm{x} \mathrm{d} \bm{v}, \qquad \left\{\mathcal{A}, \mathcal{S}\right\} \equiv 0
\label{eq:entropy}
\end{equation}
paired with a symmetric, positive semi-definite dissipative bracket $(\mathcal{F},\mathcal{G})$ such that the collision operator satisfies
\begin{equation}
    (\mathcal{A}, \mathcal{S})_+ = \int \dfrac{\delta \mathcal{A}}{\delta f} C[f] \mathrm{d} \bm{x} \mathrm{d} \bm{v},  \qquad (\mathcal{H}, \mathcal{S})_+ \equiv 0,
\label{eq:dissipative_bracket}
\end{equation}
where the degeneracy conditions in Eqs. \eqref{eq:entropy},\eqref{eq:dissipative_bracket} ensure the conservation of energy. In particular, by choosing $\mathcal{A} = f$, Eq. \eqref{eq:metriplectic_form} recovers the VAC system \eqref{eq:VAC}.

% in which the time evolution of any observable $\mathcal{A}[f]$ is decomposed into a conservative Hamiltonian part and a dissipative collisional part.
% Specifically, the collisionless Vlasov–Poisson dynamics is generated by a Hamiltonian functional
% \begin{equation*}
%     \mathcal{H}[f] = \int f(\bm{x},\bm{v}) \dfrac{\bm{v}^{2}}{2} \mathrm{d} \bm{x} \mathrm{d} \bm{v} + \dfrac{1}{2} \int |\bm{E}(\bm{x})|^{2} \mathrm{d} \bm{x},
% \end{equation*}
% together with the Poisson bracket
% \begin{equation*}
%     \{ \mathcal{F} , \mathcal{G} \} = \int f \left( \nabla_{\bm{x}}\dfrac{\delta \mathcal{F}}{\delta f} \cdot \nabla_{\bm{v}}\dfrac{\delta \mathcal{G}}{\delta f}  - \nabla_{\bm{v}}\dfrac{\delta \mathcal{F}}{\delta f} \cdot \nabla_{\bm{x}}\dfrac{\delta \mathcal{G}}{\delta f} \right) \mathrm{d}\bm{x} \mathrm{d} \bm{v},
% \end{equation*}
% which generates the conservative transport and self-consistent field coupling.
% The collisional effects are represented by a symmetric, negative semi-definite dissipative bracket $(\mathcal{F},\mathcal{G})$, together with an entropy functional
% \begin{equation*}
%     \mathcal{S}[f] = -\int f \ln f \mathrm{d} \bm{x} \mathrm{d} \bm{v},
% \end{equation*}
% such that the collision operator satisfies
% \begin{equation*}
%     \int \dfrac{\delta \mathcal{A}}{\delta f} C[f] \mathrm{d} \bm{x} \mathrm{d} \bm{v} = (\mathcal{A}, \mathcal{S}), \qquad (\mathcal{H}, \mathcal{S}) = 0,
% \end{equation*}
% ensuring conservation of energy and monotonic entropy production.

To close Eq. \eqref{eq:VAC}, we need to specify the form of the collision operator $C[f]$, or essentially, the dissipative bracket. 
One common choice is the Landau operator 
% Within this metriplectic framework, the classical Landau collision operator corresponds to a particular choice of the dissipative bracket under the assumptions of weak coupling and small-angle binary Coulomb scattering.
% In dimensionless form, the Landau operator for a one-component plasma reads
\begin{equation*}
    C_{L}[f] = \gamma \nabla_{\bm{v}} \cdot \int \dfrac{|\bm{u}|^{2} \bm{I} - \bm{u}\bm{u}^{T}}{|\bm{u}|^{3}} \left( \nabla_{\bm{v}} f(\bm{v}) f(\bm{v}') - \nabla_{\bm{v}'} f(\bm{v}') f(\bm{v}) \right), \qquad \bm{u} = \bm{v}-\bm{v}' ,
\end{equation*}
where the prefactor $\gamma \propto \ln\Lambda$ and the Coulomb parameter scales as $\Lambda \sim \Gamma^{-3/2}$ and $\Gamma = \frac{q_{e}^{2} (4\pi n/3)^{1/3}}{4\pi \epsilon_{0} k_{B} T}$.
In particular, the Landau operator $C_L[f]$  assumes an isotropic binary collision in the velocity space that only depends on the relative velocity $\bm u$ of the colliding particles. Moreover, the spatially inhomogeneous effect is modeled through the empirical form of the Coulomb logarithm $\ln \Lambda$.   
While the Landau operator $C_L[f]$ provides a good approximation in the weak coupling regime $\Gamma \ll 1$, the operator loses the accuracy for the moderately coupled regime $\Gamma \sim \mathcal{O}(1)$ (i.e., $\ln \Lambda \le 0$), which motivates the present data-driven construction of spatially inhomogeneous $C[f]$ directly from the microscale MD simulations.

\subsection{Inhomogeneous metriplectic collision operator}
\label{sec:generalized_kernel}

%Within the metriplectic framework \cite{morrison1986paradigm}, 
% the kinetic evolution of the distribution function is consistently coupled with reversible Hamiltonian (Poisson) dynamics and irreversible dissipative (metric) processes.
% The Vlasov dynamics is generated by a Poisson bracket $\{\cdot,\cdot\}$ and the Hamiltonian functional $\mathcal{H}[f]$, while collisional dissipation is generated by a symmetric dissipative bracket $(\cdot,\cdot)$ acting on the entropy functional $\mathcal{S}[f]$.

% The entropy functional is chosen as
% \begin{equation}
%     \mathcal{S}[f] = - \int f \ln f \, \mathrm{d}\bm{x}\mathrm{d}\bm{v} .
% \end{equation}

% By construction of the Poisson bracket for the Vlasov equation, the entropy is a Casimir invariant,
% \begin{equation}
%     \{ \mathcal{S}, \mathcal{A} \} = 0 \quad \text{for any functional } \mathcal{A}[f],
% \end{equation}
% and therefore remains unchanged under the Hamiltonian evolution.
% This property guarantees that entropy production is exclusively associated with the collisional dynamics.
Within the metriplectic framework \cite{morrison1986paradigm},  we construct the dissipative bracket between two functionals $\mathcal{A}[f]$ and $\mathcal{B}[f]$ as
\begin{equation}
    (\mathcal{A},\mathcal{B})_+ = \dfrac{1}{2} \int \left( \nabla_{\bm{v}} \dfrac{\delta \mathcal{A}}{\delta f} - \nabla_{\bm{v}'} \dfrac{\delta \mathcal{A}}{\delta f'} \right) \cdot \bm{\omega}(\bm{v},\bm{v}';\rho,T) f f' \left( \nabla_{\bm{v}} \dfrac{\delta \mathcal{B}}{\delta f} - \nabla_{\bm{v}'} \dfrac{\delta \mathcal{B}}{\delta f'} \right) \mathrm{d}\bm{v}' \mathrm{d}\bm{v}\mathrm{d}\bm{x} ,
\label{eq:dissipative_bracket_2}
\end{equation}
where $\bm{\omega}(\bm{v}, \bm{v}'; \rho, T): \mathbb{R}^3 \times \mathbb{R}^3 \times \mathbb{R}_{+} \times \mathbb{R}_{+} \to \mathbb{S}^{3}_{+}$ is the collision kernel that depends on the local density $\rho(\bm{x})$ and temperature $T(\bm{x})$ defined by 
\begin{equation*}
    \rho(\bm{x}) = \int f ~ \mathrm{d}^{3} \bm{v}, ~~ \bar{\bm{v}}(\bm{x}) = \int \bm{v} f ~ \mathrm{d}^{3} \bm{v} / \rho(\bm{x}), ~~ T(\bm{x}) = \int \dfrac{(\bm{v} - \bar{\bm{v}}(\bm{x}))^{2}}{2} f ~ \mathrm{d}^{3} \bm{v} / \rho(\bm{x}). 
\end{equation*}
The dissipative bracket \eqref{eq:dissipative_bracket_2} satisfies 
\begin{equation*}
    (\mathcal{A},\mathcal{B})_+ = (\mathcal{B},\mathcal{A})_+, \qquad (\mathcal{S},\mathcal{S})_+ \ge 0,
\end{equation*}
which implies monotonic entropy production. In particular, by choosing entropy $S[f]$ defined by Eq. \eqref{eq:entropy}, the collision operator $C[f]$ can be written in variational form 
\begin{equation}
    C[f] = \left( f, \mathcal{S} \right)_+ = \nabla_{\bm{v}} \cdot \int \bm{\omega}(\bm{v}, \bm{v}'; \rho, T) \left[ f' ~ \nabla_{\bm{v}} f - f ~ \nabla_{\bm{v}'} f' \right] \mathrm{d}\bm{v}'. 
\label{eq:collision_metripletic}
\end{equation}
Eq. \eqref{eq:collision_metripletic} provides the starting point to construct $C[f]$ by learning a proper form of the collision kernel $\bm \omega(\bm v, \bm v'; \rho, T)$. To form a valid collision operator, $C[f]$ needs to further satisfy proper conservation laws 
\begin{equation}
\int C[f] \phi(\bm v) {\rm d}\bm v = 0 \qquad \phi(\bm v) \in {\text {span}} \left \{1, \bm v, \vert \bm v\vert^2\right \},
\label{eq:conservation_law}
\end{equation}
for the mass, momentum and energy, as well as the frame-indifference constraint
\begin{equation}
C[\tilde{f}] = C[f],
\label{eq:frame_indifference}
\end{equation}
for both $\tilde{f}(\bm x, \bm v, t) = f(\bm{U} \bm x, \bm{U}  \bm v, t)$ with $\bm{U} \in {\rm SO}(3)$ and $\tilde{f}(\bm x, \bm v, t) = f(\bm x, -\bm v, t)$.  
This motivates a generalized form of the collision kernel
\begin{equation}
\begin{split} 
    \bm{\omega} (\bm{v}, \bm{v}'; \rho, T) &= \boldsymbol{\mathcal{P}} \left(g_{r}^2 \widetilde{\boldsymbol{r}}\widetilde{\boldsymbol{r}}^T + g_{s}^2 \widetilde{\boldsymbol{s}}\widetilde{\boldsymbol{s}}^T\right) \boldsymbol{\mathcal{P}} \\
    &= g_{1}^{2} |\bm{\mathcal{P}}\bm{r}|^{2} \bm{\mathcal{P}} + (g_{2}^{2}-g_{1}^{2}) \bm{\mathcal{P}} \bm{r} \bm{r}^{T} \bm{\mathcal{P}} ,  
\end{split}\label{eq:kernel_form}    
\end{equation}
where $\boldsymbol{r}=\boldsymbol{v}+\boldsymbol{v}' - 2\bar{\boldsymbol{v}}$, $\boldsymbol{s}=\boldsymbol{u}\times\boldsymbol{r}$ and $\boldsymbol{\mathcal{P}}=\boldsymbol{I}-\boldsymbol{u}\boldsymbol{u}^T/|\boldsymbol{u}|^2$. $\widetilde{\boldsymbol{u}}=\boldsymbol{u}/|\boldsymbol{u}|$, $\widetilde{\boldsymbol{r}}=\boldsymbol{\mathcal{P}}\boldsymbol{r}/|\boldsymbol{\mathcal{P}}\boldsymbol{r}|$ and $\widetilde{\boldsymbol{s}}=\boldsymbol{s}/|\boldsymbol{s}|$ as mutually orthogonal unit vectors, satisfying $\widetilde{\boldsymbol{u}} \widetilde{\boldsymbol{u}}^{T} + \widetilde{\boldsymbol{r}} \widetilde{\boldsymbol{r}}^{T} + \widetilde{\boldsymbol{s}} \widetilde{\boldsymbol{s}}^{T} = \boldsymbol{I}$. $g_r$ and $g_s$ represent the collision magnitude along the two unit directions in the plane orthogonal to the relative velocity $\bm u = \bm v - \bm v'$. Furthermore, we can rewrite the kernel in the form of the isotropic part $\propto \mathcal{P}$ and the anisotropic part $\propto  \mathcal{P} \bm r \bm r^T \mathcal{P}$, where $g_s^2 = g_1^2 \vert \mathcal{P}\boldsymbol{r}\vert^2$ and $g_r^2 = g_2^2 \vert \mathcal{P}\boldsymbol{r}\vert^2$. 

By constructing scalar encoders $g_{\ast}$ that satisfy the rotational, translational, and reflectional invariance,
\begin{equation}
    g_{\ast}(\bm{U}\bm{v}, \bm{U}\bm{v}'; \rho, T) = g_{\ast}(\bm{v}, \bm{v}'; \rho, T), \qquad 
    g_{\ast}(\bm{v}, \bm{v}'; \rho, T) = g_{\ast}(\bm{v}', \bm{v}; \rho, T)
\label{eq:encoder_g}
\end{equation}
for $\ast = 1, 2$ and $\bm{U} \in \rm{SO}(3)$, we can show that the kernel $\bm{\omega}$ satisfies the symmetry and zero projection properties in Eq. \eqref{eq:kernel_conditions} in \ref{app:prop}, and $C[f]$ by Eq. \eqref{eq:collision_metripletic} \eqref{eq:kernel_form} strictly satisfies the conservation laws \eqref{eq:conservation_law} and frame-indifference constraints \eqref{eq:frame_indifference}. Moreover, the encoders $g_{\ast}$ explicitly depend on the local density $\rho$ and temperature $T$, and therefore naturally captures the spatially inhomogeneous energy transfer for various plasma conditions. 

As a special case, by choosing the collision kernel in an isotropic form $\bm{\omega}(\bm{v},\bm{v}') \propto \ln \Lambda ~ |\bm{u}|^{-1}\bm{\mathcal{P}}$, the collision operator $C[f]$ \eqref{eq:collision_metripletic} recovers the standard Landau model for the weakly coupled regime. In this study, we do not take such heuristic assumptions. Rather, the collision kernel \eqref{eq:kernel_form} takes a generalized symmetry-breaking structure which allows for the anisotropic collision magnitude, i.e., $g_1 \neq g_2$, to account for the heterogeneous energy transfer in the plane orthogonal to the $\bm u$. Moreover, $g_\ast$ not only depends on relative velocity $\bm u$ but also the average velocity $\bm r$ and the local thermodynamic states to capture the many-body effects due to the particle correlations in the moderately coupled regime. In particular, we will learn the encoders $g_1$ and $g_2$ directly from the micro-scale MD simulations specified in next subsection. In the remainder of this work, we use ``DDCO'' to denote the present data-driven collision operator.

\subsection{Data-driven construction from molecular dynamics}
To construct the encoders $g_\ast$ satisfying condition \eqref{eq:encoder_g}, one natural choice is to represent $g_\ast$ by the velocity magnitude of the colliding particles, i.e., $g_{\ast}(\vert \bm u\vert, \vert \bm v\vert, \vert \bm v'\vert; \rho, T)$. However, this non-stationary form violates the convolution structure of $C[f]$ in Eq. \eqref{eq:collision_metripletic} that leads to large computational cost. To enable efficient evaluation of the collision operator $C[f]$, we approximate $g_{\ast}$ using a low-rank spectral separation representation, which decomposes the dependence on the individual velocities and their relative velocity.
Specifically, we employ a set of univariate basis functions to separate the velocity magnitude $|\bm{v}|$, $|\bm{v}'|$ and the relative velocity $|\bm{u}|$, yielding
\begin{equation}\label{eq:SS}
    \begin{aligned}
        g_{\ast}(\bm{v}, \bm{v}'; \rho, T) &= \sum_{j'=1}^{J'} \mathcal{L}_{\ast}^{j'}(|\bm{u}|; \rho, T) 
        \left[ \mathcal{M}_{\ast}^{j'}(|\bm{v}|; \rho, T) \mathcal{N}_{\ast}^{j'}(|\bm{v}'|; \rho, T) + \mathcal{N}_{\ast}^{j'}(|\bm{v}|; \rho, T) \mathcal{M}_{\ast}^{j'}(|\bm{v}'|; \rho, T) \right] , \\
        &= \sum_{j=1}^{J} L_{\ast}^{j}(|\bm{u}|; \rho, T) M_{\ast}^{j}(|\bm{v}|; \rho, T) N_{\ast}^{j}(|\bm{v}'|; \rho, T),
    \end{aligned}
\end{equation}
where the second form explicitly enforces permutation symmetry between $\bm{v}$ and $\bm{v}'$, with $J = 2J'$.
The univariate basis functions $\{\mathcal{L}_{\ast}^{j'}, \mathcal{M}_{\ast}^{j'}, \mathcal{N}_{\ast}^{j'}\}_{j'=1}^{J'}$ are represented by neural networks and learned directly from MD data.
% $L_{\ast}^{2j'-1}=L_{\ast}^{2j'}=\mathcal{L}_{\ast}^{j'}$, $M_{\ast}^{2j'-1}=N_{\ast}^{2j'}=\mathcal{M}_{\ast}^{j'}$ and $M_{\ast}^{2j'}=N_{\ast}^{2j'-1}=\mathcal{N}_{\ast}^{j'}$, for $j'=1,\cdots,J'$, and $J=2J'$.

This low-rank decomposition restores a convolution structure in the velocity space, allowing each term of the collision operator in Eq. \eqref{eq:collision_metripletic} to be evaluated via fast Fourier transforms. As a result, the computational complexity is reduced from $\cO(N_{v}^{2})$ to $\cO(JN_{v} \log N_{v})$, which is critical for the high-dimensional 1D-3V kinetic simulations.
Accordingly, Eq. \eqref{eq:collision_metripletic} can be written as the sum of three contributions,
\begin{equation}\label{eq:sim_C}
    \begin{aligned}
        C[f] =& \nabla \cdot \int g_{1}^{2} |\bm{\mathcal{P}}\bm{r}|^{2} \bm{\mathcal{P}} f(\bm{v}) f(\bm{v}') \left[ \nabla \log f(\bm{v}) - \nabla'\log f(\bm{v}') \right] \mathrm{d}\bm{v}' \\
        &+ \nabla \cdot \int g_{2}^{2} \bm{\mathcal{P}} \bm{r} \bm{r}^{T} \bm{\mathcal{P}} f(\bm{v}) f(\bm{v}') \left[ \nabla \log f(\bm{v}) - \nabla'\log f(\bm{v}') \right] \mathrm{d}\bm{v}' \\
        &- \nabla \cdot \int g_{1}^{2} \bm{\mathcal{P}} \bm{r} \bm{r}^{T} \bm{\mathcal{P}} f(\bm{v}) f(\bm{v}') \left[ \nabla \log f(\bm{v}) - \nabla'\log f(\bm{v}') \right] \mathrm{d}\bm{v}' \\
        =&: I_{1} + I_{2} - I_{3},
    \end{aligned}
\end{equation}
and each term can be further expressed as a finite sum of convolution operators in velocity space, see Eq. \eqref{eq:sim_I1} in \ref{app:prop}.

% The univariate basis functions capture the anisotropic nature of the different energy transfer mechanisms during particle collisions in systems at different temperatures and densities, directly learned by matching the prediction of the collision operator and the micro-scale MD trajectories.
The collision kernel is learned directly from microscopic MD data conditioned on local environment $(\rho,T)$ across the weakly and moderately coupled regimes with the coupling parameter $\Gamma$ up to $\cO(1)$.
The loss function is constructed in a weak form by matching the MD with the kinetic equation at collection of density-temperature pairs $\{\rho^{(l)}, T^{(l)}\}_{l=1}^{N_{l}}$, using a set of test functions $\left \{\psi_k(\bm{v})\right\}_{k=1}^{K}$,
% To learn the low-rank kernel functions from particle data in inhomogeneous cases, we construct a weak-form loss using a set of test functions $\left \{\psi_k(\bm{v})\right\}_{k=1}^K$, by matching MD and the kinetic equation at several pairs of fixed densities $\rho$ and temperatures $T$
\begin{equation*}
    \mathcal{L}= \sum_{l=1}^{N_l} \sum_{k=1}^K \sum_{n=1}^{N_T} \left(\left.\dfrac{\partial f^{n} (\rho^{(l)}, T^{(l)})}{\partial t}\right|_{MD} - C[f]^{n} (\rho^{(l)}, T^{(l)}), ~ \psi_k(\bm{v})\right)^2 ,
\end{equation*}
where $(\cdot, \cdot)$ denotes the inner product in the velocity space, and $N_T$ is the number of temporal snapshots of MD trajectories. 
The velocity distribution is approximate by empirical measure from MD data as $f^{\ast}(\bm{v}; \rho^{(l)}, T^{(l)}) = \frac{1}{N_{MD}}\sum_{m=1}^{N_{MD}} \delta(\bm{v}-\bm{v}_m)$, avoiding explicit density reconstruction.
The MD trajectories employed for training correspond to spatially homogeneous systems in Eq. \eqref{eq:VAC}, as $\tilde{f}(\bm{x},\bm{v},t) = f(\bm{v},t)$, maintaining density and temperature during the MD simulation process.
By training the collision operator on homogeneous MD trajectories across multiple $(\rho^{(l)}, T^{(l)})$, the learned collision operator captures the intrinsic collisional physics at the microscopic level, while spatial inhomogeneity is incorporated at the kinetic level during subsequent simulations.

The MD contribution to the loss is evaluated using finite differences in time,
\begin{equation}
    \left(\left.\dfrac{\partial f^n}{\partial t}\right|_{MD}, \psi_k(\bm{v})\right) \approx
    % \left(\dfrac{f^{n+1}-f^{n}}{\Delta t},\psi_k(\bm{v})\right) =
    \dfrac{1}{N_{MD}\Delta t}\sum_{m=1}^{N_{MD}} [\psi_k(\bm{v}_{m}^{n+1})-\psi_k(\bm{v}_{m}^{n})] ,
\end{equation}
while the kinetic contribution is computed in the weak form,
\begin{equation}
    \left(C[f],\psi_k(\boldsymbol{v})\right) 
    = \left(f(\boldsymbol{v})\int \boldsymbol{\omega} f(\boldsymbol{v}')\mathrm{d}\boldsymbol{v}', \nabla_{\boldsymbol{v}} \nabla_{\boldsymbol{v}} \psi_k(\boldsymbol{v})\right) 
    + \left( f(\boldsymbol{v}) \int (\nabla_{\boldsymbol{v}} - \nabla_{\boldsymbol{v}'}) \cdot \boldsymbol{\omega} ~  f(\boldsymbol{v}') \mathrm{d}\boldsymbol{v}', \nabla_{\boldsymbol{v}} \psi_k (\boldsymbol{v}) \right) .
\end{equation}
Due to the large number of MD particles $N_{MD} \sim 10^{6}$, the velocity integrals are evaluated using random mini-batch sampling \cite{ketkar2017stochastic, li2014efficient, jin2020random, jin2021random},
\begin{equation}
    \begin{aligned}
        \left( f(\boldsymbol{v})\int \boldsymbol{\omega} f(\boldsymbol{v}')\mathrm{d}\boldsymbol{v}', \nabla_{\boldsymbol{v}} \nabla_{\boldsymbol{v}} \psi_k(\boldsymbol{v}) \right) &=
        \dfrac{1}{N_{MD}^2}\sum_{m,m'}^{N_{MD}} \boldsymbol{\omega}(\boldsymbol{v}_{m},\boldsymbol{v}_{m'} ') : \nabla_{\boldsymbol{v}} \nabla_{\boldsymbol{v}} \psi_k(\boldsymbol{v}_{m}), \\
        &\approx \dfrac{1}{P}\sum_{p=1}^{P} \boldsymbol{\omega}(\boldsymbol{v}_{m(p)},\boldsymbol{v}_{m'(p)} ') : \nabla_{\boldsymbol{v}} \nabla_{\boldsymbol{v}} \psi_k(\boldsymbol{v}_{m(p)}),
    \end{aligned}
\end{equation}
with pairs $(\bm{v}_{m(p)}, \bm{v}_{m'(p)} ')$ randomly selected from the MD samples for $p=1,\cdots,P$.

We refer to  \ref{app:MD} for the details on the MD setup, sampling strategy, and training procedure.

\subsection{Numerical scheme}
\label{sec:numerical_scheme}

We consider the one-dimensional, three-velocity (1D-3V) VAC system
\begin{equation}
    \partial_{t} f + v_{x} \nabla_{x} f + E(x,t)\,\nabla_{v_{x}} f = C[f], \qquad (x,\bm{v}) \in \Omega_{x} \times \Omega_{\bm{v}},
\end{equation}
where $\bm{v} = (v_{x}, v_{y}, v_{z})$ and the electric field is updated by $\partial_{t} E(x,t) = - \lambda_{D}^{-2} J(x,t)$ with initial condition $\nabla_{x} E(x,t=0) = \lambda_{D}^{-2} (\rho(x,t=0) - \bar{\rho}(t=0))$, and the density is defined as $\rho(x,t) = \int_{\Omega_{\bm{v}}} f(x,\bm{v},t)\,\mathrm{d}^{3}\bm{v}$.
Here we assume that the velocity space $\Omega_{v}$ is finitely truncated in practice.
%, which is reasonable because the solutions remain compactly supported with certain conditions.
In all numerical schemes and results below, we assume that the global solutions exist and $\Omega_{v}$ is large enough, so that the numerical solutions satisfy $f\approx0$ and $f v^{2}\approx 0$ at $\partial \Omega_{v}$.

The spatial domain $\Omega_{x}=[0,L_{x}]$ and the velocity domain $\Omega_{\bm{v}}=[-V_{x},V_{x}]\times[-V_{y},V_{y}]\times[-V_{z},V_{z}]$ are discretized using uniform grids,
\begin{equation}
    \begin{aligned}
        &x_{i} = \left(i+\dfrac{1}{2}\right)\Delta x, \qquad i=0,\ldots,N_{x}-1, \qquad \Delta x = \dfrac{L_{x}}{N_{x}}, \\
        &v_{j_{\alpha}} = -V_{\alpha} + \left(j_{\alpha}+\dfrac{1}{2}\right)\Delta v_{\alpha}, \qquad j_{\alpha}=0,\ldots,N_{v_{\alpha}}-1, \qquad \Delta v_{\alpha} = \dfrac{2V_{\alpha}}{N_{v_{\alpha}}}, \quad \alpha\in\{x,y,z\}.
    \end{aligned}
\end{equation}

The distribution function is approximated by its cell-averaged values
\begin{equation}
    f^{n}_{i,\bm{j}} \approx f(x_{i},v_{j_{x}},v_{j_{y}},v_{j_{z}},t^{n}),
\end{equation}
where $\bm{j}=(j_{x},j_{y},j_{z})$, $\Delta v = \Delta v_{x}\Delta v_{y}\Delta v_{z}$, and the discrete mass, momentum, kinetic energy, electric potential energy, total energy, and entropy are defined as
\begin{equation}\label{eq:MPE}
    \begin{aligned}
        M^{n} =& \Delta x \Delta v \sum_{i,\bm{j}} f_{i,\bm{j}}^{n}, ~~ &\bm{P}^{n} =& \Delta x \Delta v \sum_{i,\bm{j}} \bm{v}_{\bm{j}} f_{i,\bm{j}}^{n}, \\
        \mathcal{E}_{K}^{n} =& \Delta x \Delta v \sum_{i,\bm{j}} \dfrac{|\bm{v}_{\bm{j}}|^{2}}{2} f_{i,\bm{j}}^{n}, ~~ &\mathcal{E}_{P}^{n} =& \Delta x \sum_{i} \dfrac{\lambda_{D}^{2}}{2} |E_{i}^{n}|^{2} , \\
        \mathcal{E}^{n} =& \mathcal{E}_{K}^{n} + \mathcal{E}_{P}^{n}, ~~ &S^{n} =& - \Delta x \Delta v \sum_{i,\bm{j}} f_{i,\bm{j}}^{n} \log f_{i,\bm{j}}^{n} .
    \end{aligned}
\end{equation}

In this work, we develop an explicit second-order energy-conservation numerical scheme.
The evolution operator is decomposed into advection and collision terms
\begin{equation}
    \partial_{t} f = \mathcal{F}f, \qquad \mathcal{F} = \mathcal{F}_{a} + \mathcal{F}_{C},
\end{equation}
with
\begin{equation}
    \mathcal{F}_{a}f = -v_{x}\partial_{x}f -E(x,t)\nabla_{v_{x}}f, \qquad \mathcal{F}_{C}f = C[f].
\end{equation}

To advance the solution in time, we employ a second-order Strang splitting
\cite{strang1968construction}. 
For a time step $\Delta t$, the single-step propagator is approximated by
\begin{equation}
    e^{\Delta t\,\mathcal{F}} = e^{\frac{\Delta t}{2}\mathcal{F}_{a}} e^{\Delta t\,\mathcal{F}_{C}} e^{\frac{\Delta t}{2}\mathcal{F}_{a}} + \mathcal{O}(\Delta t^{3}).
\end{equation}

\subsubsection{Advection}
We use the explicit second-order scheme \cite{cheng2014energy} of the advection term $\partial_{t} f + v_{x} \nabla_{x} f + E(x,t)\,\nabla_{v_{x}} f = 0$ with Amp\`ere solver  for a single time step $\Delta t$,
\begin{equation}\label{eq:VAadv}
\begin{aligned}
    &\dfrac{f^{n+1/2} - f^{n}}{\Delta t / 2} + v_{x} \cdot \nabla_{x} f^{n} + E^{n} \cdot \nabla_{v_{x}} f^{n} = 0 , \\
    &\dfrac{E^{n+1}-E^{n}}{\Delta t} = - \lambda_{D}^{-2} J^{n+1/2}, ~~ \text{where} ~ J^{n+1/2} = \int f^{n+1/2} \bm{v} ~ \mathrm{d}\bm{v} , \\
    &\dfrac{f^{n+1}-f^{n}}{\Delta t} + v_{x} \cdot \nabla_{x} f^{n+1/2} + \dfrac{1}{2} (E^{n}+E^{n+1}) \cdot \nabla_{v_{x}} f^{n+1/2} = 0 ,
\end{aligned}
\end{equation}
where $E^{n}$ and flux $J^{n+1/2}$ are defined on the spatial grid points.

We take the upwind scheme for the spatial-advection term
\begin{equation}
\begin{aligned}
    v_{x} \cdot \nabla_{x} f_{i,\bm{j}}^{n} =& \dfrac{1}{\Delta x} \left( F_{i+1/2,\bm{j}} - F_{i-1/2,\bm{j}} \right), \\
    F_{i+1/2,\bm{j}} =& v_{j_x} f_{i+1/2,\bm{j}}^{\mathrm{up}} , \\
    f_{i+1/2,\bm{j}}^{\mathrm{up}} =&
    \begin{cases}
        f_{i,\bm{j}}, & v_{j_x} > 0, \\
        f_{i+1,\bm{j}}, & v_{j_x} < 0 ,
    \end{cases}
\end{aligned}
\end{equation}
which preserves the discrete mass, momentum, and kinetic energy exactly.
% and satisfies a discrete entropy inequality under the CFL condition $\frac{\Delta t}{\Delta x}\max |v_x|\le 1$.
% This monotone discretization provides robustness for long-time transport and guarantees mass conservation and nonlinear stability in the spatial direction.

The velocity-advection is discretized componentwise using a conservative central flux,
\begin{equation}
\begin{aligned}
    E_{i}^{n} \cdot \nabla_{v_{x}} f_{i,\bm{j}}^{n} =& \dfrac{1}{\Delta v_x} \left( G_{i,\bm j+1/2_x} - G_{i,\bm j-1/2_x} \right) , \\
    G_{i,\bm j+1/2_x} =& E_i \dfrac{f_{i,\bm j} + f_{i,\bm j+1_x}}{2} ,
\end{aligned}
\end{equation}
where $\bm j=(j_x,j_y,j_z)$, $\bm j+1_x=(j_x+1,j_y,j_z)$, and $\bm j+1/2_x=(j_x+1/2,j_y,j_z)$.

\begin{proposition}
    The Amp\`ere solver iteratively updates the electric field $E^{n}$ instead of recalculating it from the density distribution, which ensures the conservation of total energy $\mathcal{E}$.
\end{proposition}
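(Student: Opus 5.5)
The plan is to show directly that one full step of \eqref{eq:VAadv} satisfies $\mathcal{E}^{n+1}=\mathcal{E}^{n}$, with $\mathcal{E}^n$ as in \eqref{eq:MPE}. The intermediate state $f^{n+1/2}$ only serves to define $J^{n+1/2}$, so I only need to track the third line of \eqref{eq:VAadv} together with the Amp\`ere update. The key idea is that the change in kinetic energy produced by the third line equals a discrete work term $\Delta t\,\Delta x\sum_i \tfrac12(E_i^n+E_i^{n+1})J_i^{n+1/2}$. The Amp\`ere update produces exactly the opposite change in $\mathcal{E}_P$. This cancellation relies on $E^{n+1}$ being obtained from $J^{n+1/2}$ rather than recomputed from $\rho^{n+1}$ through Poisson. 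Throughout, the current is discretized consistently as $J_i^{n+1/2}=\Delta v\sum_{\bm j} v_{j_x} f^{n+1/2}_{i,\bm j}$.

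\textbf{Step 1 (kinetic energy).} I multiply the third line of \eqref{eq:VAadv} by $\Delta x\,\Delta v\,|\bm v_{\bm j}|^2/2$ and sum over $(i,\bm j)$.

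For the spatial term, $|\bm v_{\bm j}|^2$ is independent of $i$, so the upwind fluxes $F_{i\pm1/2,\bm j}$ telescope in $i$ and vanish under periodic boundary conditions in $x$.

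For the velocity term, I apply summation by parts in $j_x$. This turns $\sum_{j_x}\tfrac{|\bm v_{\bm j}|^2}{2}(G_{\bm j+1/2_x}-G_{\bm j-1/2_x})/\Delta v_x$ into
\[
-\sum_{j_x} G_{\bm j+1/2_x}\,\frac{v_{j_x+1}^2-v_{j_x}^2}{2\Delta v_x}
=-\sum_{j_x} \bar E_i\,\frac{f_{\bm j}+f_{\bm j+1_x}}{2}\,\frac{v_{j_x}+v_{j_x+1}}{2},
\]
where $\bar E_i=\tfrac12(E_i^n+E_i^{n+1})$; the $v_y,v_z$ parts of $|\bm v|^2$ do not depend on $j_x$ and drop out. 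Expanding the product and reindexing the cross terms $f_{j}v_{j+1}+f_{j+1}v_j$ gives $\sum_j f_j(v_{j+1}+v_{j-1})=2\sum_j f_j v_j$. The whole sum therefore collapses to $\sum_{j_x} v_{j_x} f_{\bm j}$, up to boundary contributions at $\partial\Omega_{\bm v}$.

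Altogether this gives $\mathcal{E}_K^{n+1}-\mathcal{E}_K^{n}=\Delta t\,\Delta x\sum_i \bar E_i J_i^{n+1/2}$, with $f=f^{n+1/2}$ inside $J$.

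\textbf{Step 2 (field energy).} Using $a^2-b^2=(a-b)(a+b)$ and the second line of \eqref{eq:VAadv}, I get
\[
\mathcal{E}_P^{n+1}-\mathcal{E}_P^{n}=\lambda_D^2\Delta x\sum_i (E_i^{n+1}-E_i^n)\bar E_i=-\Delta t\,\Delta x\sum_i J_i^{n+1/2}\bar E_i .
\]
This cancels Step 1 exactly. Since the collision substep conserves $\sum|\bm v|^2 f$ by \eqref{eq:conservation_law} (in its discrete counterpart) and does not touch $E$, the Strang-split scheme conserves $\mathcal{E}$ over the full step.

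\textbf{Main obstacle.} The delicate point is the boundary bookkeeping in the velocity summation by parts. Both the reindexing of the cross terms and the discarded end fluxes produce terms at $j_x=0$ and $j_x=N_{v_x}-1$, and these vanish only under the standing assumption that $f\approx0$ and $f|\bm v|^2\approx0$ at $\partial\Omega_{\bm v}$. Alternatively one can impose zero-flux boundary conditions $G_{-1/2}=G_{N_{v_x}-1/2}=0$; then the telescoping is exact and the only residual comes from the reindexing step. I would first write the identity exactly with these boundary terms displayed, then invoke the assumption.

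A second point to check is that the centered averaging $\tfrac12(E^n+E^{n+1})$ in the third line must match the midpoint form $(E^{n+1}+E^n)/2$ that arises from the difference of squares. This is precisely why the scheme uses that average, and why recomputing $E^{n+1}$ from $\rho^{n+1}$ would break the cancellation.
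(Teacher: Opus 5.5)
Your proposal is correct and follows the paper's own argument. The spatial upwind fluxes telescope under periodicity. A summation by parts in $v_x$ turns the $\tfrac12(E^n+E^{n+1})$-weighted velocity term into $\Delta t\,\Delta x\sum_i \tfrac12(E_i^n+E_i^{n+1})J_i^{n+1/2}$. The Amp\`ere update then makes the difference of squares in $\mathcal{E}_P$ cancel this term exactly.

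The differences are cosmetic. You sum by parts on the flux $G_{\bm j+1/2_x}$ rather than on the expanded central difference, and you keep the velocity-boundary terms visible where the paper drops them under its $f\approx 0$ assumption. Your closing remark about the collision substep goes beyond this proposition. It relies on the separately stated semi-discrete conservation of the collision discretization, which the explicit two-stage update then inherits.
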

\begin{proof}
    The total energy is defined by
    \begin{equation}
        \mathcal{E}^{n} = \Delta x \Delta v \sum_{i,\bm{j}} \dfrac{\bm{v}_{\bm{j}}^{2}}{2} f_{i,\bm{j}}^{n} + \Delta x \sum_{i} \dfrac{\lambda_{D}^{2}}{2} |E_{i}^{n}|^{2}
    \end{equation}
    
    So that the energy difference
    \begin{equation}
    \begin{aligned}
        \dfrac{2(\mathcal{E}^{n+1}-\mathcal{E}^{n})}{\Delta t} =& - \Delta x \Delta v \sum_{i,\bm{j}} \bm{v}_{\bm{j}}^{2} 
        \left(\dfrac{1}{\Delta x} ( F_{i+1/2,\bm{j}}^{n+1/2} - F_{i-1/2,\bm{j}}^{n+1/2} ) + \dfrac{1}{4\Delta v_{x}} (E_{i}^{n}+E_{i}^{n+1}) \cdot (f_{i,\bm j+1_x}^{n+1/2} - f_{i,\bm j-1_x}^{n+1/2}) \right)\\
        & - \Delta x \sum_{i} (E_{i}^{n}+E_{i}^{n+1}) \cdot J_{i}^{n+1/2} \\
        =& - \Delta v \sum_{\bm{j}} \bm{v}_{\bm{j}}^{2} \left( \sum_{i} (F_{i+1/2,\bm{j}}^{n+1/2} - F_{i-1/2,\bm{j}}^{n+1/2}) \right) 
        - \dfrac{\Delta x \Delta v}{4 \Delta v_{x}} \sum_{i} (E_{i}^{n}+E_{i}^{n+1}) \sum_{\bm{j}} \bm{v}_{\bm{j}}^{2} (f_{i,\bm j+1_x}^{n+1/2} - f_{i,\bm j-1_x}^{n+1/2}) \\
        & - \Delta x \sum_{i} (E_{i}^{n}+E_{i}^{n+1}) \cdot J_{i}^{n+1/2} \\
        =& \Delta x \sum_{i} (E_{i}^{n}+E_{i}^{n+1}) \Delta v \sum_{\bm{j}} \bm{v}_{\bm{j}} f_{i,\bm j}^{n+1/2}
        - \Delta x \sum_{i} (E_{i}^{n}+E_{i}^{n+1}) \cdot J_{i}^{n+1/2} \\
        =& 0 ,
    \end{aligned}
    \end{equation}
    where the first term in line 3 vanishes due to the periodic boundary conditions in the x direction, and the third equation holds for $\sum_{\bm{j}} \bm{v}_{\bm{j}}^{2} (f_{i,\bm j+1_x}^{n+1/2} - f_{i,\bm j-1_x}^{n+1/2}) = - \sum_{\bm{j}} (\bm{v}_{\bm{j}-1_{x}}^{2} - \bm{v}_{\bm{j}+1_{x}}^{2}) f_{i,\bm j}^{n+1/2}$
    % % continuous
    % \begin{equation}
    %     \mathcal{E}^{n} = \dfrac{1}{2} \int f^{n} |\bm{v}|^{2} ~ \mathrm{d}\bm{v}\mathrm{d} x + \dfrac{1}{2} \int |E^{n}|^{2} ~ \mathrm{d} x .
    % \end{equation}
    % \begin{equation}
    % \begin{aligned}
    %     2(\mathcal{E}^{n+1}-\mathcal{E}^{n}) =& -\int \bm{v} \cdot \nabla_{x} f^{n+1/2} |\bm{v}|^{2} + \dfrac{1}{2} (E^{n}+E^{n+1}) \cdot \nabla_{v_{x}} f^{n+1/2} |\bm{v}|^{2} ~ \mathrm{d}\bm{v}\mathrm{d} x \\
    %     & - \int (E^{n+1}+E^{n})\cdot J^{n+1/2} ~ \mathrm{d}x \\
    %     =& -\int |\bm{v}|^{2}\bm{v} \cdot \left(\int \nabla_{x} f^{n+1/2} ~ \mathrm{d}x\right) ~ \mathrm{d}\bm{v} + \int (E^{n}+E^{n+1}) \cdot (f^{n+1/2}\bm{v}) ~ \mathrm{d}\bm{v}\mathrm{d} x \\
    %     & - \int (E^{n+1}+E^{n})\cdot J^{n+1/2} ~ \mathrm{d}x \\
    %     =& 0 .
    % \end{aligned}
    % \end{equation}
\end{proof}

\begin{remark}
    The numerical scheme in Eq. \eqref{eq:VAadv} only conserves the total mass and energy, while it cannot conserve the total momentum. Alternatively, we propose a momentum conservation scheme in \ref{app:mom_conserv_scheme} with a first-order time discretization error for the total energy.
\end{remark}

% This discretization satisfies a discrete summation-by-parts property, which ensures exact conservation of mass and momentum in velocity space at the semi-discrete level. 
% Periodic boundary conditions are imposed in all velocity directions. 
% To avoid spurious interactions induced by velocity-space periodicity, the computational velocity domain in each direction is extended by zero padding whose width is at least comparable to the effective nonzero support of the distribution function.

\subsubsection{Collision}
The second-order discretization of the collision term $\partial_{t} f = C[f]$ is 
\begin{equation}
\begin{aligned}
    \dfrac{f^{n+1/2} - f^{n}}{\Delta t/2} =& C[f^{n}] , \\
    \dfrac{f^{n+1} - f^{n}}{\Delta t} =& C[f^{n+1/2}] ,
\end{aligned}
\end{equation}
with 
\begin{equation}
    \begin{aligned}
        &C[f](x,\bm{v},t) = \nabla \cdot \int \bm{\omega}(\bm{v}, \bm{v}'; \rho, T) \left[ f(\bm{v}') \nabla f(\bm{v}) - f(\bm{v}) \nabla' f(\bm{v}') \right] \mathrm{d}\bm{v}',  \quad (\bm{x},\bm{v}) \in \Omega_{x} \times \Omega_{v} , \\
        &\bm{\omega}(\bm{v}, \bm{v}'; \rho, T) = g_{1}^{2} |\bm{\mathcal{P}}\bm{r}|^{2} \bm{\mathcal{P}} + (g_{2}^{2}-g_{1}^{2}) \bm{\mathcal{P}} \bm{r} \bm{r}^{T} \bm{\mathcal{P}} , \\
        &\left( \int \bm{\omega}(\bm{v}, \bm{v}'; \rho, T) \left[ f(\bm{v}') \nabla f(\bm{v}) - f(\bm{v}) \nabla' f(\bm{v}') \right] \mathrm{d}\bm{v}'\right) \cdot \hat{\bm{n}}(\bm{v}) = 0, ~~\quad (\bm{x},\bm{v}) \in \Omega_{x} \times \partial\Omega_{v}.
    \end{aligned}
\end{equation}
with generalized collision kernel further depends parametrically on the local density and temperature, and structure-preserving encoder functions $g_{\ast}(\bm{v}, \bm{v}'; \rho, T)$ directly learned from MD data.
% constructed by minimizing the empirical loss function
% retains the MD fidelity beyond empirical models, and meanwhile, strictly preserves the physical constraints

At each spatial grid point $\bm{x}_{i}$, the collision operator is discretized in velocity space using a conservative spectral method.
The generalized collision kernel depends on the local density and temperature, which are computed consistently from the distribution function.
% construct the semi-discrete structure-preserving scheme based on the finite-difference method
\begin{equation}
    \begin{aligned}
        \dfrac{\partial f_{\bm{j}}(t)}{\partial t} =& \mathrm{D}^{+} \bm{p}_{\bm{j}}, \qquad 
        \mathrm{D}^{+} \bm{p}_{\bm{j}} = \sum_{\alpha} [\bm{p}_{\bm{j}+1_{\alpha}} - \bm{p}_{\bm{j}-1_{\alpha}}] / 2\Delta v_{\alpha} , \\
        % \mathrm{D}^{+} \bm{p}_{\bm{j}} , \\ % k,l,p
        \bm{p}_{\bm{j}} =& \Delta v \sum_{\bm{j}' \in \mathbb{Z}^{3}} \bm{\omega}(\bm{v}_{\bm{j}}, \bm{v}_{\bm{j}'}; \rho, T) f_{\bm{j}} f_{\bm{j}'} \left[ \mathrm{D}^{-} \log f_{\bm{j}} - \mathrm{D}^{-} \log f_{\bm{j}'} \right] , \\
        \rho =& \Delta v \sum_{\bm{j}} f_{\bm{j}}, ~~ \bar{\bm{v}}(x) = \Delta v \sum_{\bm{j}} \bm{v}_{j} f_{\bm{j}} ~ / ~ \rho, ~~ T = \Delta v \sum_{\bm{j}} \dfrac{\bm{v}_{j}^{2}}{2} f_{\bm{j}} ~ / ~ \rho ,
    \end{aligned}
\end{equation}
where $\mathrm{D}^{-}$ and $\mathrm{D}^{+}$ are dual discrete gradient and divergence operators of central difference discretization, the subscript $\bm{j} = (j_{x}, j_{y}, j_{z}) \in \mathbb{Z}^{3}$, $\bm{j}+1_{x} = (j_{x}+1, j_{y}, j_{z})$, $\Delta v = \Delta v_{x} \Delta v_{y} \Delta v_{z}$, $\bm{p}_{\bm{j}}$ denotes the discrete probability flux at $\bm{v}_{\bm{j}}$.
The resulting semi-discrete scheme satisfies exact conservation of mass and momentum, conservation of energy up to time discretization errors, and a discrete H-theorem pointwise in space.
The discrete fast Fourier transform is applied to reduce the computational complexity from $\cO(N_v^2)$ to $\cO(J'^{2} N_v \log N_v)$.
The detailed derivation of the velocity-space discretization can be found in Ref. \cite{zhao2025fast}.

% $\mathrm{D}^{-}$ and $\mathrm{D}^{+}$ are dual discrete gradient and divergence operators of central difference discretization
% central difference scheme for simulation

\begin{proposition}
    The numerical semi-discretization scheme defined above satisfies the conservation law of mass, momentum, and kinetic energy, and the H-theorem in the discrete sense.
\end{proposition}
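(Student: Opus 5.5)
The plan is to work entirely at one fixed spatial node $x_i$, since the collision semi-discretization decouples in $x$ and $(\rho,T)$ enter $\bm{\omega}$ only as frozen parameters computed from $f_{\bm{j}}$. All four properties will follow from one discrete weak form. The first step is a discrete summation by parts: for any grid function $\phi_{\bm{j}}$, the central divergence $\mathrm{D}^{+}$ and gradient $\mathrm{D}^{-}$ are negative adjoints, so
\begin{equation*}
\frac{d}{dt}\,\Delta v\sum_{\bm{j}}\phi_{\bm{j}} f_{\bm{j}} = \Delta v\sum_{\bm{j}}\phi_{\bm{j}}\,\mathrm{D}^{+}\bm{p}_{\bm{j}} = -\Delta v\sum_{\bm{j}}\mathrm{D}^{-}\phi_{\bm{j}}\cdot\bm{p}_{\bm{j}} .
\end{equation*}
This holds under the zero-flux/periodic treatment of the velocity boundary, using $f\approx 0$ near $\partial\Omega_v$. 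Substituting the definition of $\bm{p}_{\bm{j}}$ and symmetrizing in $(\bm{j},\bm{j}')$ will use $\bm{\omega}(\bm{v},\bm{v}')=\bm{\omega}(\bm{v}',\bm{v})$, which follows from the symmetry of $g_\ast$ in Eq.~\eqref{eq:encoder_g}, $\bm{\mathcal{P}}$ and $\bm{r}$ being symmetric under exchange. The result is
\begin{equation*}
-\frac{(\Delta v)^2}{2}\sum_{\bm{j},\bm{j}'}\left(\mathrm{D}^{-}\phi_{\bm{j}}-\mathrm{D}^{-}\phi_{\bm{j}'}\right)^{T}\bm{\omega}_{\bm{j}\bm{j}'}f_{\bm{j}}f_{\bm{j}'}\left(\mathrm{D}^{-}\log f_{\bm{j}}-\mathrm{D}^{-}\log f_{\bm{j}'}\right),
\end{equation*}
which is the discrete analogue of the bracket \eqref{eq:dissipative_bracket_2}.

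Each property then comes from a choice of $\phi$.

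\textbf{Mass.} Take $\phi\equiv1$; then $\mathrm{D}^{-}\phi=0$, so mass is conserved.

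\textbf{Momentum.} Take $\phi=v_\alpha$. The central difference of a linear function is exact, so $\mathrm{D}^{-}v_{\alpha,\bm{j}}=\bm{e}_\alpha$. The difference $\mathrm{D}^{-}\phi_{\bm{j}}-\mathrm{D}^{-}\phi_{\bm{j}'}$ therefore vanishes identically, which gives momentum conservation.

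\textbf{Kinetic energy.} Take $\phi=|\bm{v}|^2/2$. Central differencing of a quadratic on a uniform grid is also exact, giving $\mathrm{D}^{-}\phi_{\bm{j}}=\bm{v}_{\bm{j}}$. The difference becomes $\bm{u}=\bm{v}_{\bm{j}}-\bm{v}_{\bm{j}'}$, and the zero-projection property $\bm{\omega}\bm{u}=0$ (from $\bm{\mathcal{P}}\bm{u}=0$, as in Eq.~\eqref{eq:kernel_conditions}) kills the sum.

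\textbf{H-theorem.} Take $\phi=-\log f$, i.e.\ $\frac{d}{dt}S=-\Delta v\sum(\log f_{\bm{j}}+1)\dot f_{\bm{j}}$, where the constant part drops by mass conservation. This yields
\begin{equation*}
\frac{(\Delta v)^2}{2}\sum_{\bm{j},\bm{j}'} f_{\bm{j}}f_{\bm{j}'}\,\bm{w}^{T}\bm{\omega}_{\bm{j}\bm{j}'}\bm{w}\ \ge 0, \qquad \bm{w}=\mathrm{D}^{-}\log f_{\bm{j}}-\mathrm{D}^{-}\log f_{\bm{j}'},
\end{equation*}
by positive semidefiniteness of $\bm{\omega}=\bm{\mathcal{P}}(g_r^2\tilde{\bm{r}}\tilde{\bm{r}}^T+g_s^2\tilde{\bm{s}}\tilde{\bm{s}}^T)\bm{\mathcal{P}}$ and $f\ge0$.

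The main obstacle is the boundary and index bookkeeping in the summation by parts. On a truncated grid with $\bm{j}'\in\mathbb{Z}^3$, the exactness $\mathrm{D}^{-}|\bm{v}|^2/2=\bm{v}_{\bm{j}}$ fails at the edge if periodic wrap-around is used. I will handle this by extending $f$ by zero outside $\Omega_v$, consistent with the standing assumption that $f\approx0$ and $fv^2\approx0$ at $\partial\Omega_v$, so no wrap-around enters and all boundary terms vanish.

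A second subtlety is that $\rho,T$ (and $\bar{\bm{v}}$ inside $\bm{r}$) depend on $f$ and hence on time. Since they enter only as parameters of a symmetric PSD kernel at each instant, the identities above hold pointwise in time and the argument is unaffected. I will also note that the energy claim is semi-discrete only; the time discretization error is not covered, which is consistent with the statement.
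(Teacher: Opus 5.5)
Your proposal is correct and follows essentially the paper's own argument. At a fixed node $x_i$ the paper derives, via summation by parts and symmetrization in $(\bm{j},\bm{j}')$, the same identity $\frac{\mathrm{d}\Phi}{\mathrm{d}t}=-\frac{1}{2}\Delta v^{2}\sum_{\bm{j},\bm{j}'}(\mathrm{D}^{-}\phi_{\bm{j}}-\mathrm{D}^{-}\phi_{\bm{j}'})^{T}\bm{\omega}\,f_{\bm{j}}f_{\bm{j}'}(\mathrm{D}^{-}\log f_{\bm{j}}-\mathrm{D}^{-}\log f_{\bm{j}'})$, then takes $\phi\in\{1,\bm{v},|\bm{v}|^{2}\}$ (the difference lies in $\ker\bm{\omega}$) and $\phi=-\log f$ (positive semi-definiteness of $\bm{\omega}$). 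Your extra details (exactness of central differences on linear and quadratic functions, $(\rho,T)$ frozen at each instant) fill in what the paper leaves implicit, and the paper additionally records the equality case $\mathrm{D}^{-}\log f_{\bm{j}}-\mathrm{D}^{-}\log f_{\bm{j}'}\in\ker\bm{\omega}$ (discrete local Maxwellian). One caveat: a literal zero extension of $f$ clashes with $\log f$ at the edge of the grid. So the boundary terms vanish only under the paper's standing assumption $f\approx 0$ near $\partial\Omega_{v}$, which is exactly the level of rigor the paper itself uses.
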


\begin{proof}
For a fixed spatial grid point $x_i$, any discrete physical quantity and its semi-discrete evolution can be written in the form
\begin{equation}
    \begin{aligned}
        \Phi &= \Delta v \sum_{\bm{j}} \phi_{\bm{j}} f_{\bm{j}}, \\
        \dfrac{\mathrm d \Phi}{\mathrm d t}
        &= -\dfrac{1}{2}\Delta v^{2}
        \sum_{\bm{j},\bm{j}'}
        \bigl(\mathrm{D}^{-}\phi_{\bm{j}}
        - \mathrm{D}^{-}\phi_{\bm{j}'}\bigr)^{T}
        \bm{\omega}(\bm{v}_{\bm{j}},\bm{v}_{\bm{j}'};\rho,T)\,
        f_{\bm{j}} f_{\bm{j}'}
        \bigl(\mathrm{D}^{-}\log f_{\bm{j}}
        - \mathrm{D}^{-}\log f_{\bm{j}'}\bigr),
    \end{aligned}
\end{equation}
which follows directly from the conservative discretization and a discrete summation-by-parts argument.
By choosing the test functions $\phi_{\bm{j}} = \{1, ~ \bm{v}_{j},~ \bm{v}_{\bm{j}}^{2} \}$, we obtain $\mathrm{D}^{-}\phi_{\bm{j}} - \mathrm{D}^{-}\phi_{\bm{j}'} \in \text{Ker}\bigl(\bm{\omega}(\bm{v}_{\bm{j}},\bm{v}_{\bm{j}'};\rho,T)\bigr)$, which implies that the discrete mass, momentum, and kinetic energy are exactly conserved.

For the discrete entropy, choosing $\phi_{\bm{j}}=-\log f_{\bm{j}}$ yields $\frac{\mathrm d \mathcal S}{\mathrm d t} \geq 0$ since the collision kernel $\bm{\omega}$ is symmetric and positive semi-definite.
The equality holds if and only if $\mathrm{D}^{-}\log f_{\bm{j}} - \mathrm{D}^{-}\log f_{\bm{j}'} \in \text{Ker}\bigl(\bm{\omega}(\bm{v}_{\bm{j}},\bm{v}_{\bm{j}'};\rho,T)\bigr)$, which characterizes the discrete local Maxwellian distribution.
\end{proof}

% \begin{equation}
%     \frac{\mathrm d \mathcal S}{\mathrm d t}
%     = \frac{1}{2}\Delta v^{2}
%     \sum_{\bm{j},\bm{j}'}
%     \bigl(\mathrm{D}^{-}\log f_{\bm{j}}
%     - \mathrm{D}^{-}\log f_{\bm{j}'}\bigr)^{T}
%     \bm{\omega}(\bm{v}_{\bm{j}},\bm{v}_{\bm{j}'};\rho,T)\,
%     f_{\bm{j}} f_{\bm{j}'}
%     \bigl(\mathrm{D}^{-}\log f_{\bm{j}}
%     - \mathrm{D}^{-}\log f_{\bm{j}'}\bigr)
%     \ge 0,
% \end{equation}

\section{Numerical results}\label{sec:numerical}

First, let us examine the transport coefficients predicted from the present DDCO over a broad range of plasma conditions. Specifically, we consider the self-diffusion coefficient and shear viscosity  under various density and temperature values. Physically, these transport coefficients characterize the relaxation of the plasma kinetics under perturbation near the Maxwellian distribution, and are determined by the collisional operator under the second-order Chapman-Enskog expansion \cite{chapman1990mathematical,reichl2009modern}, i.e., 
% We calculate the transport coefficients of the present DDCO model using the second-order Chapman-Enskog expansion \cite{chapman1990mathematical,reichl2009modern}, with the self-diffusion coefficient and shear viscosity defined by
\begin{equation}
\begin{aligned}
    D =& -\int v_x \zeta (\bm v) f_M(\bm v) d^{3} \bm v, \qquad  \eta = -\frac{nm^2}{k_BT} \int  f_M(\bm v) v_x v_y \xi(\bm v)  d^{3} \bm v, \\
    C^{-}(h) :=& f_M^{-1}  \nabla \cdot \int \bm \omega(\bm v, \bm v'; \rho, T) \nabla_v(h) f_M(\bm v) f_M (\bm v') d^3\bm v' , \\
    C^{+}(h) :=& f_M^{-1}  \nabla \cdot \int \bm \omega(\bm v, \bm v'; \rho, T) (\nabla_v h(\bm v)- \nabla_{v'}h(\bm v')) f_M(\bm v) f_M (\bm v') d^3\bm v'.
\end{aligned}
\end{equation}
where $\zeta(\bm v) = \sum_{n=0}^p c_n S^n_{3/2} (\vert \bm v\vert) v_x$ and $\zeta(\bm v) = \sum_{n=0}^p b_n S^n_{5/2} (\vert \bm v\vert) v_x v_y$ are determined by expansion of the Sonine polynomial basis functions, satisfying $C^{-} (\zeta(\bm v)) = v_x$ and $C^{+} (\xi(\bm v)) = v_x v_y$. $f_M(\bm v; \rho, T)$ represents the Maxwellian distribution under the density $\rho$ and temperature $T$. Fig. \ref{fig:diff_visc} shows the prediction of the transport coefficients predicted from the present DDCO and the Landau model, as well as the values obtained from the direct micro-scale full MD simulations. For the weakly coupled regime $\Gamma = \frac{q_{e}^{2} (4\pi n/3)^{1/3}}{4\pi \epsilon_{0} k_{B} T} \ll 1$ (i.e., low density, high-temperature), the prediction from both Landau and the present DDCO show good agreement with the MD results. However, for moderately coupled regime, the prediction of the Landau model show apparent deviation. In contrast, the present DDCO model can still accurately characterize the macroscopic phenomena of the system.

% It shows that at high temperatures, the diffusion and viscosity of the system can be captured by the Landau equation, but at low temperatures, i.e., under strong coupling conditions, the Landau equation fails. 
% However, our model DDCO can still accurately characterize the macroscopic phenomena of the system.

\begin{figure}[H]
    \centering
    \includegraphics[width=0.7\textwidth]{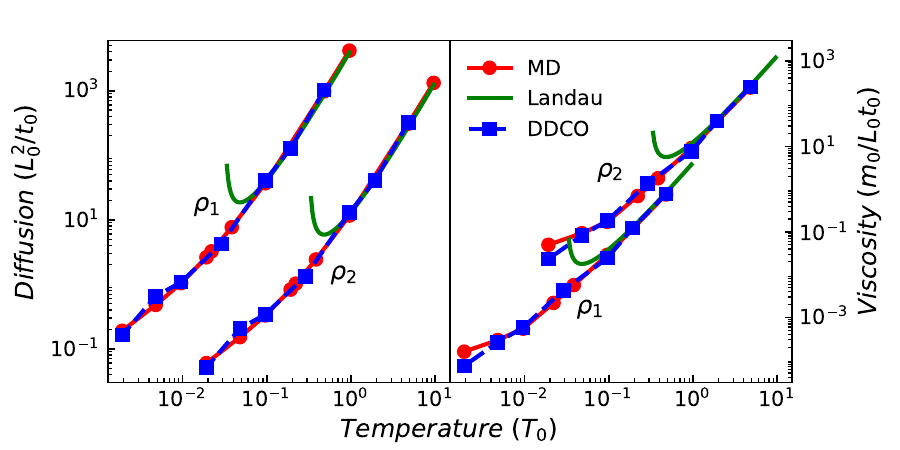}
    \caption{Comparison of the diffusion and viscosity coefficients predicted from MD, Landau equation and DDCO models under different temperatures $T_0$ and densities $\rho = m_0 n$ with number density $n_{1}=10^{21}~{\rm m}^{-3}$ and $n_{2}=10^{24}~{\rm m}^{-3}$.}
    \label{fig:diff_visc}
\end{figure}

Next, we consider the 1D-3V plasma kinetics with the initial condition
\begin{equation*}
f(x, \bm v, t=0) = \rho_0  \tilde{f}(\bm{v}_{\bm{j}}; x_{i}) 
\end{equation*}
where $\rho_0 = n_0m_0$ with $n_0 = 10^{24} m^{-3}$ and $\tilde{f}(\bm{v}_{\bm{j}}; x_{i})$ takes various distributions with spatially dependent temperature, i.e.,  
\begin{equation}
\begin{aligned}
    &\int  \tilde{f}(\bm{v}; x) {\rm d} \bm v = 1 , 
    \quad \quad \qquad \quad \int \bm{v}  \tilde{f}(\bm{v}; x) {\rm d} \bm v = \bm 0, \\
    &\int \vert \bm{v} \vert^{2} \tilde{f}(\bm{v}; x) {\rm d} = T(x) , \quad~~
    T(x) = 0.2 + 0.1 \sin(2\pi/L_{x}) , 
\end{aligned}    
\end{equation}
where $0<x<L_{x}$, and the unit of temperature is taken as \text{eV}.

% the x-axis is divided into $20$ grids, and the initial PDF at grids is 
% \begin{equation}
% \begin{aligned}
%     &f(x_{i},\bm{v}_{\bm{j}}) = \rho(x_{i}) \tilde{f}(\bm{v}_{\bm{j}}; x_{i}) , \\
%     &\Delta v \sum_{\bm{j}} \tilde{f}(\bm{v}_{\bm{j}}; x_{i}) = 1 , \\
%     &\Delta v \sum_{\bm{j}} \bm{v}_{\bm{j}} \tilde{f}(\bm{v}_{\bm{j}}; x_{i}) = \bar{\bm{v}}(x_{i}) , \\
%     &\Delta v \sum_{\bm{j}} [\bm{v}_{\bm{j}} - \bar{\bm{v}}(x_{i})]^{2} \tilde{f}(\bm{v}_{\bm{j}}; x_{i}) = T(x_{i}) , \\
%     &T(x_{i}) = 0.2 + 0.1 \sin[(2n-1)\pi/20] ~ \text{eV} ,
% \end{aligned}    
% \end{equation}
% where the distribution function has the same shape in the x-direction, but the temperature differs at different x-grid points (sine function).

In this work, we choose $\tilde{f}(\bm{v}_{\bm{j}}; x_{i})$ as bi-Maxwellian, symmetric double-well, asymmetric double-well. We refer to  \ref{app:MD} to the detailed form of $\tilde{f}$ and the MD setup. The plasma kinetics beyond the weakly coupled regime where coupling parameter $\Gamma$ takes the value between $0.75$ and $2.3$. The Landau model is insufficient for such regime as the prefactor $\ln \Lambda$ got negative value, which can also be seen in Fig. \ref{fig:diff_visc}. 
% It is important to note that if the classic Landau model is used, the prefactor in the Landau model $\gamma \propto \ln \Lambda$ becomes negative, rendering the simulation results unreliable, as detailed in the Appendix.
% Therefore, we estimate the dissipation rate of the Landau simulations under different $(\rho,T)$ environments to obtain an approximate prefactor function, which is used as the comparison result with the classic Landau simulation.

In the numerical simulations, we take $L_{x} = 1024~\AA = 10.24 ~ L_{0}$, $V_{x} = V_{y} = V_{z} = 8 ~ V_{0}$, with $N_{x} = 20$, $N_{v_x} = N_{v_y} = N_{v_z} = 200$, $\Delta x = 0.512 ~ L_{0}$, $\Delta v_{x} = \Delta v_{y} = \Delta v_{z} = 0.08 ~ V_{0}$, and the time step $\Delta t = 0.02 ~ t_{0}$, satisfying the CFL condition $\frac{\Delta t}{\Delta x}\max |v_x|\le 1$.

\begin{figure}[H]
    \centering
    \includegraphics[width=0.7\textwidth]{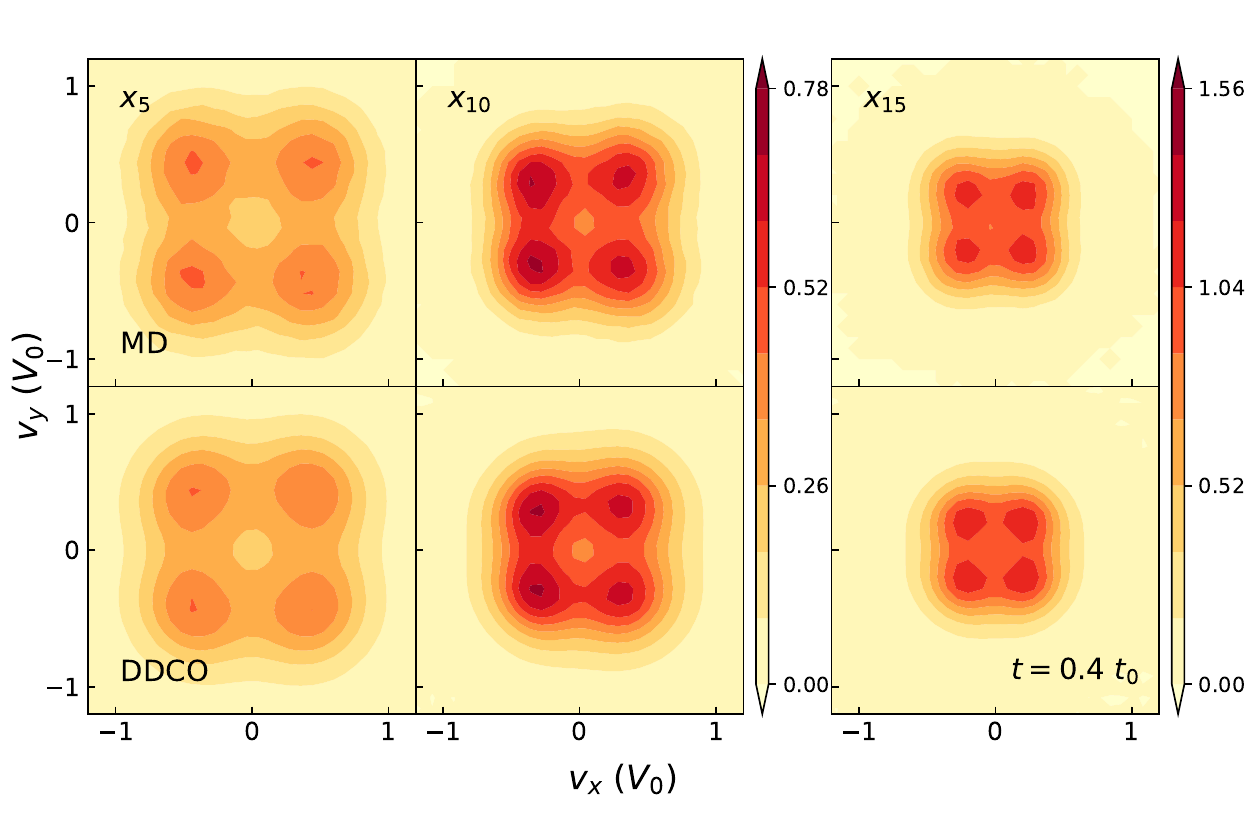} \\
    \includegraphics[width=0.7\textwidth]{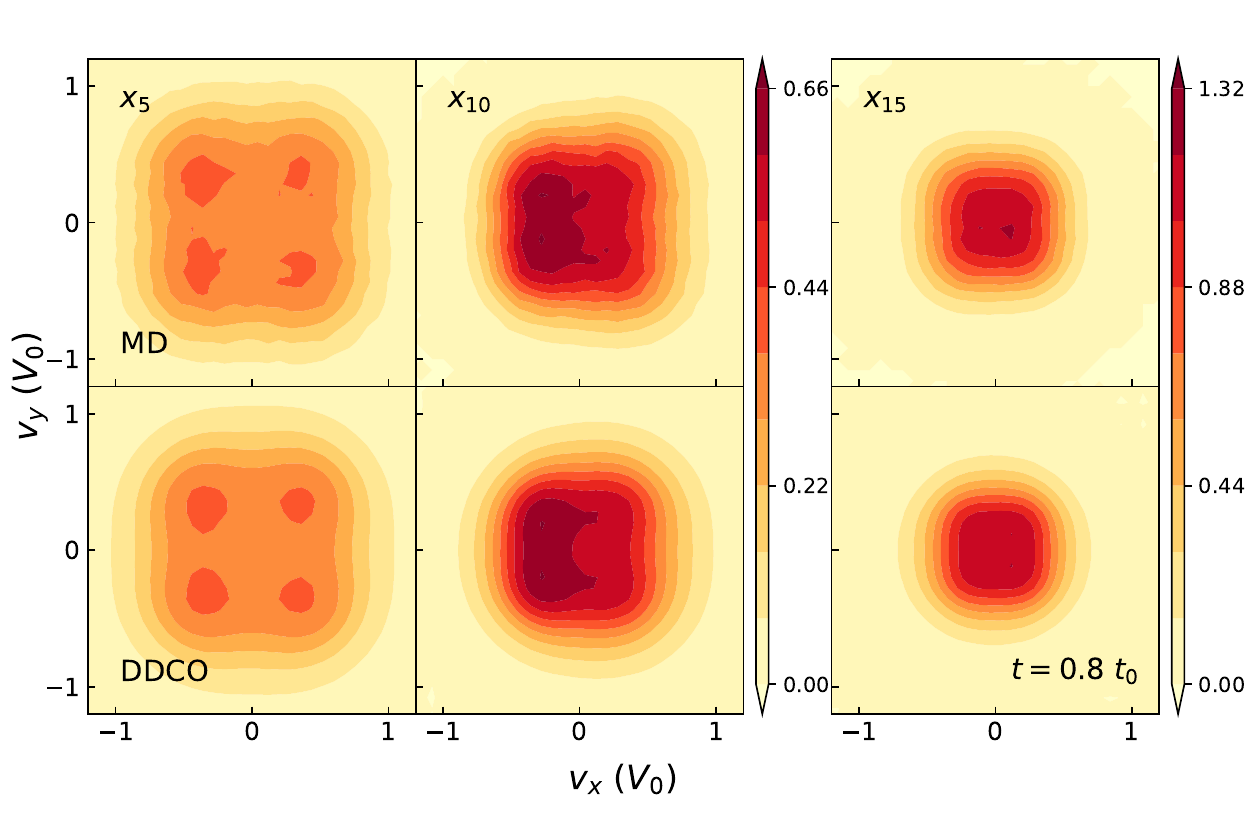}
    \caption{Comparison of the instantaneous velocity distribution on the $v_x$-$v_y$ plane at $t=0.4~t_{0}$ (up) and $0.8~t_{0}$ (bottom) with the initial symmetric double-well distribution. ``DDCO'' represents our model with the data-driven collision operator.}
    % ``DD-Landau'' represents the simplified version of data-driven Landau model
    \label{fig:slice_dw1}
\end{figure}

Fig. \ref{fig:slice_dw1} and \ref{fig:slice_dw2} present the instantaneous velocity distribution function $f(t, \bm{v}; x)$ on the $v_x$-$v_y$ plane at different $x$ values for the symmetric double-well and asymmetric double-well as the initial conditions. For both $t=0.4~t_{0}$ and $t=0.8~t_{0}$, the prediction of the present DDCO model shows good agreement with the full MD results. Specifically, $f(\bm v; x)$ at $x=15\Delta x$ (i.e., the lowest temperature) shows the fast relaxation to equilibrium while $f(\bm v; x)$ at $x=5\Delta x$ (i.e., the highest temperature) shows the slowest relaxation as expected.

\begin{figure}[H]
    \centering
    \includegraphics[width=0.7\textwidth]{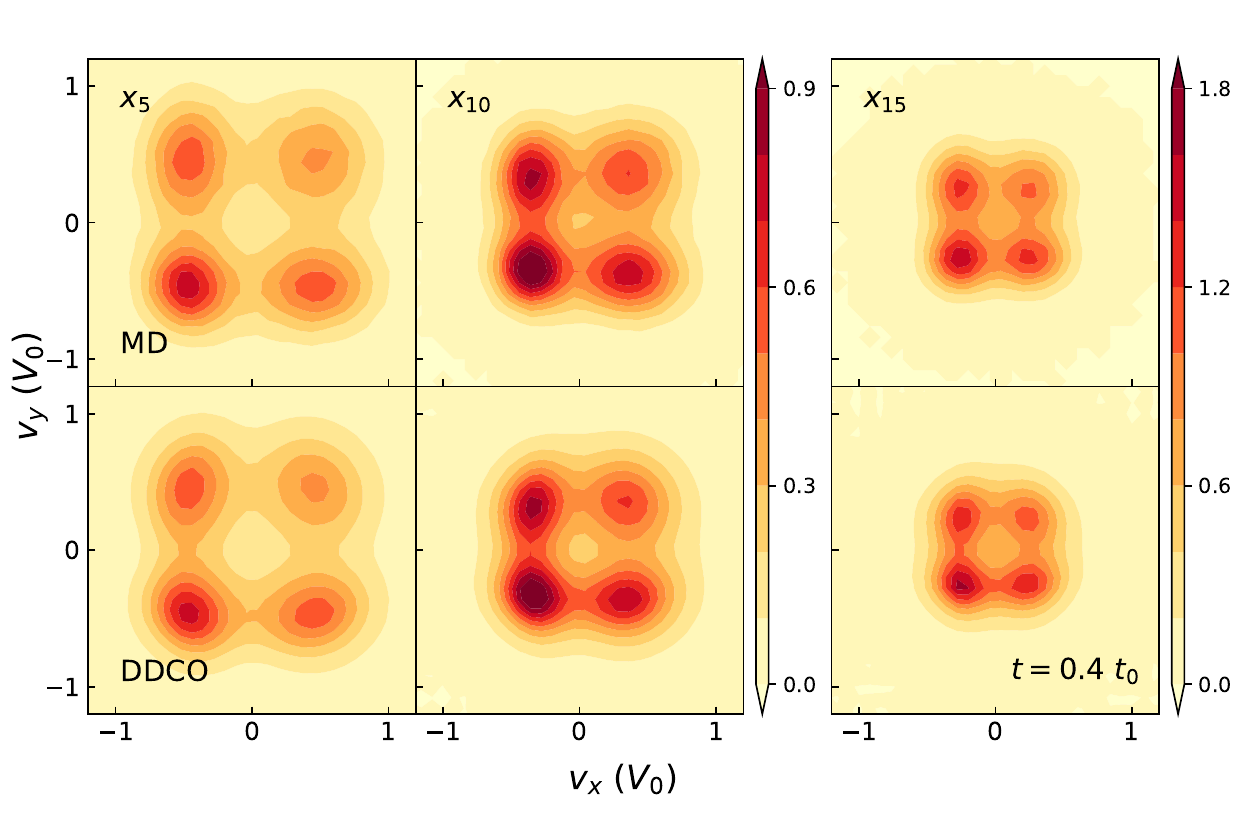} \\
    \includegraphics[width=0.7\textwidth]{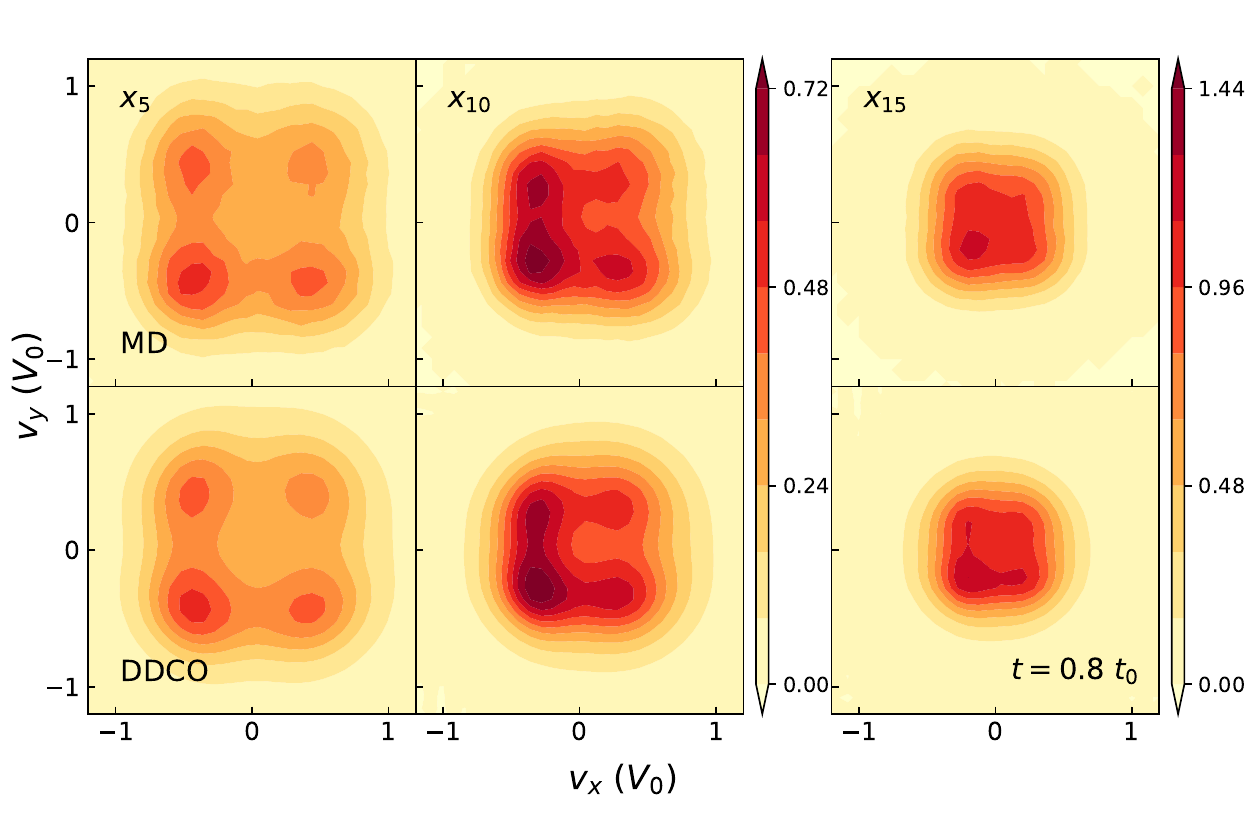}
    \caption{Comparison of the instantaneous velocity distribution on the $v_x$-$v_y$ plane at $t=0.4~t_{0}$ (up) and $0.8~t_{0}$ (bottom) with the initial asymmetric double-well distribution.}
    \label{fig:slice_dw2}
\end{figure}

Fig. \ref{fig:xvx_dw1} and \ref{fig:xvx_dw2} further show the prediction of  $f(t, \bm{v}; x)$ on the $x$-$v_x$ plane at $t=0.4~t_{0}$ and $0.6~t_{0}$. The predictions of the present DDCO model show good agreement with the MD results over the full spatial region. Such consistency validates that the prediction model can accurately predict the plasma kinetics over a broad range of plasma physical conditions. 

% In contrast, the prediction from the Landau model fails to predict.

% We also compare the velocity distribution on the $x$-$v_{x}$ plane at $t=0.4~t_{0}$ and $0.6~t_{0}$ in Fig. \ref{fig:xvx_dw1}.

\begin{figure}[H]
    \centering
    \includegraphics[width=0.48\textwidth]{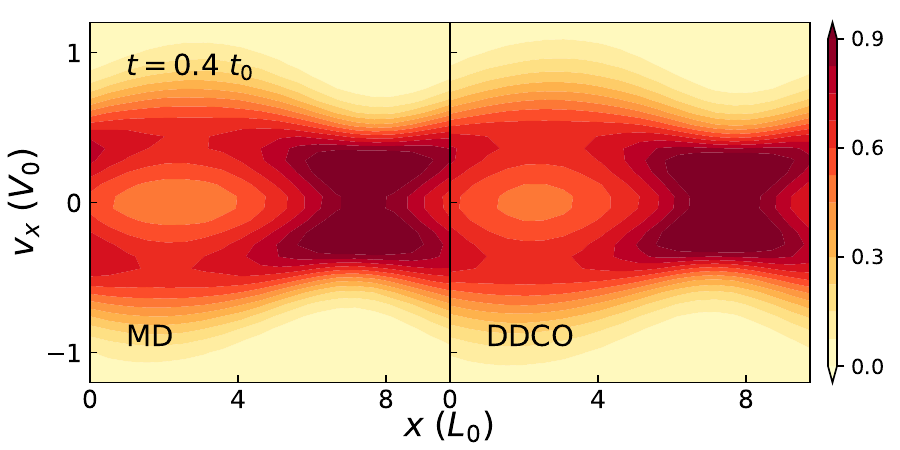}
    \includegraphics[width=0.48\textwidth]{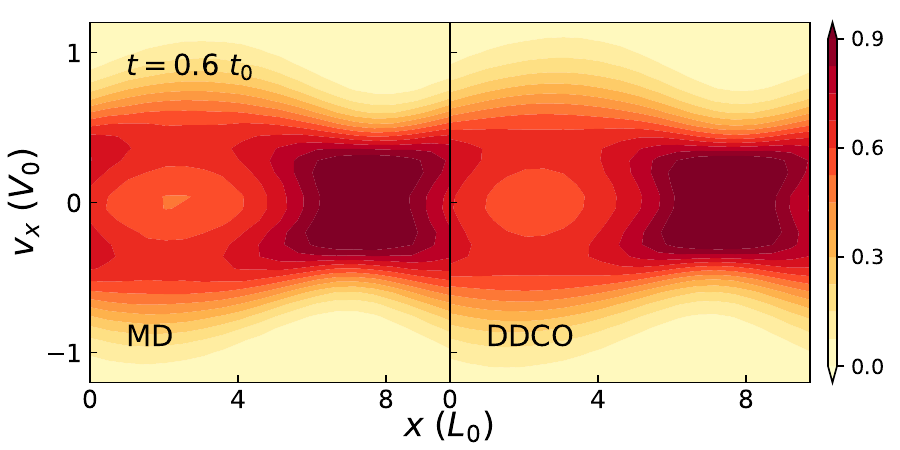}
    \caption{Comparison of the instantaneous velocity distribution on the $x$-$v_{x}$ plane at $t=0.4~t_{0}$ (left) and $0.6~t_{0}$ (right) with the initial symmetric double-well distribution.}
    \label{fig:xvx_dw1}
\end{figure}

\begin{figure}[H]
    \centering
    \includegraphics[width=0.48\textwidth]{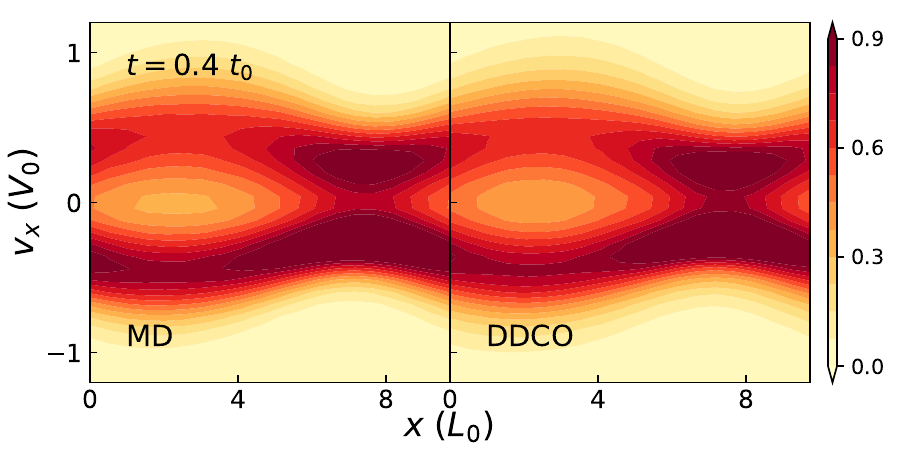}
    \includegraphics[width=0.48\textwidth]{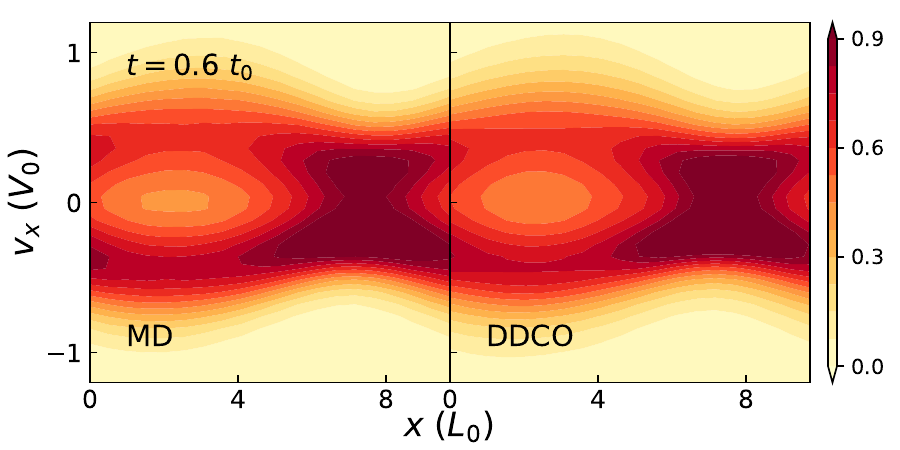}
    \caption{Comparison of the instantaneous velocity distribution on the $x$-$v_{x}$ plane at $t=0.4~t_{0}$ (left) and $0.6~t_{0}$ (right) with the initial asymmetric double-well distribution.}
    \label{fig:xvx_dw2}
\end{figure}

%The computational cost of our model is only several-fold of the Landau equation.
Fig. \ref{fig:MPE} illustrates the conservation of total mass and total energy along the numerical simulation for the initial local Maxwellian and bi-Maxwellian distributions. The total momentum exhibits some oscillations only along the x direction.
This demonstrates that our second-order numerical scheme guarantees the conservation of physical quantities and therefore is applicable for long-time simulation. 

\begin{figure}[H]
    \centering
    \includegraphics[width=0.48\textwidth]{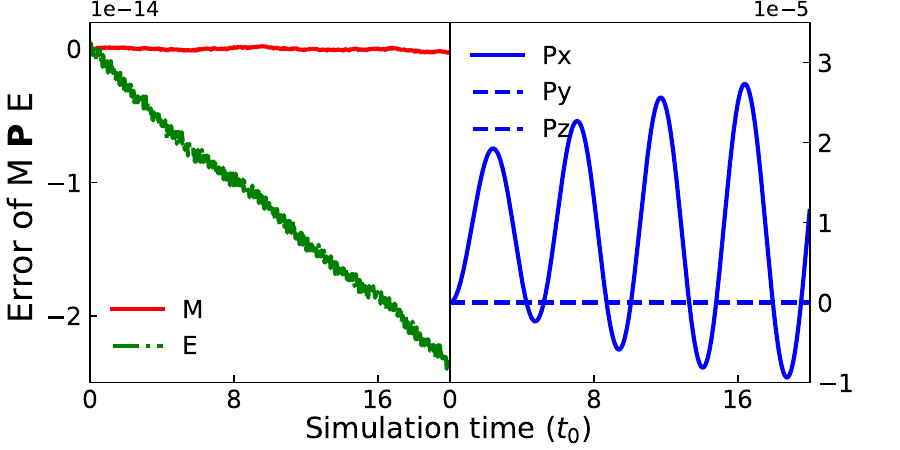} 
    \includegraphics[width=0.48\textwidth]{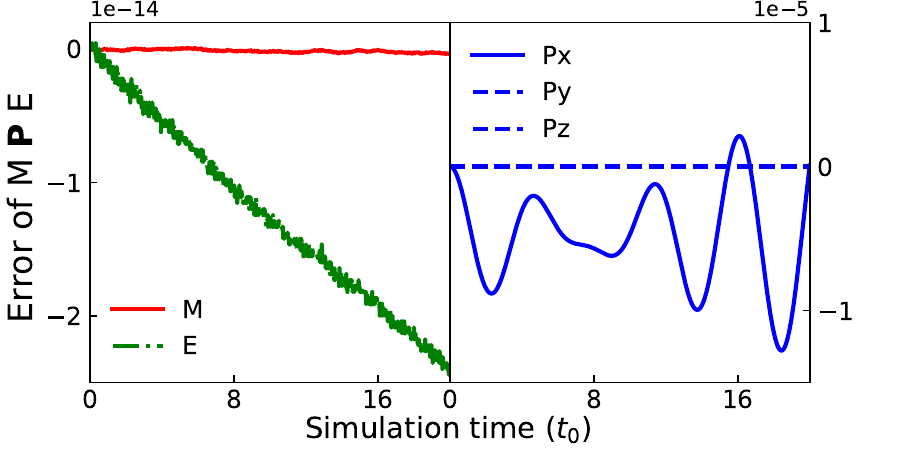}
    \caption{The error of total mass $M$, momentum $\boldsymbol{P}$, and kinetic energy $E$ along with the simulation time, with the initial local Maxwellian (left) and bi-Maxwellian (right) distributions.}
    \label{fig:MPE}
\end{figure}

As the present study focuses on the structure-preserving construction for 1D-3V collision operator, we use an explicit numerical scheme to illustrate the essential idea. Accordingly, we are not able to prove that the advection term preserves entropy for the present full discrete scheme. Since the Vlasov component does not contribute to the evolution of entropy in the continuous sense, a Taylor expansion can be employed to estimate that the change in entropy associated with the Vlasov component is approximately $\cO(\Delta t^2)$ at each time step.
Therefore, the change in entropy at each step is dominated by the collision term, which is approximately $\cO(\Delta t)$.
The numerical results in Fig. \ref{fig:etp} show that the discrete entropy consistently increases monotonically for various initial conditions.
\begin{figure}[H]
    \centering
    \includegraphics[width=0.36\textwidth]{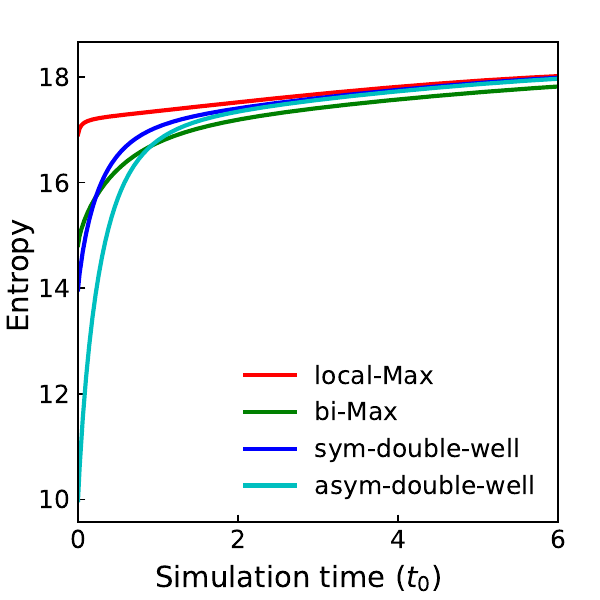}
    \caption{The entropy evolution of 1D3V plasma systems with different initial velocity distributions, including local-Maxwellian, bi-Maxwellian, symmetric double-well and asymmetric double-well distributions.}
    \label{fig:etp}
\end{figure}

\section{Summary}\label{sec:summary}

This work introduces a generalized structure-preserving data-driven collision operator into the Vlasov-Amp\`ere system for 1D-3V kinetic model. The collision operator is constructed based on the metriplectic framework and can be applied to 
spatially inhomogeneous plasma systems over a broad range of physical conditions. 
%This across the weak and moderate coupling regimes. 
The operator incorporates an anisotropic, non-stationary collision kernel, directly learned from MD simulations, and explicitly depends on local plasma density and temperature, enabling accurate descriptions of collisional processes across the weak and moderate coupling regimes, where the classical Landau operator fails.

The constructed operator strictly preserves the conservation laws, symmetry constraints, and the H-theorem. Computation efficiency is achieved by a low-rank tensor representation of the generalized collision kernel with $\cO(N\log N)$ computational complexity. We develop an explicit, second-order, energy-conserving numerical scheme to conserve fully discrete mass and total energy. Numerical results demonstrate that the proposed model accurately predict both the transport coefficients and the plasma kinetics in comparison with the full MD simulations across a wide range of densities and temperatures in 1D3V settings. 

The present framework bridges the micro-scale first principle descriptions encoded with the heterogeneous many-body effects with the meso-scale kinetic models beyond the empirical form while strictly preserve the essential physical properties on the PDE perspective. For illustration purpose, the numerical scheme is based on a standard second-order explicit time discretization, where the numerical entropy production can not be theoretically guaranteed. Future work will be devoted to develop more accurate numerical discretization scheme with theoretical guarantee for the conservation laws and H-theorem on the discrete level, as well as the applications to other complex plasma kinetic processes such as the multi-species systems.  

% non-uniform grids
% We find that, in the 1D3V example, when the temperature changes significantly, a larger region (due to high temperature) and a smaller finite difference scale (due to low temperature) are required, resulting in an excessive number of discrete points and excessive memory overhead.
% In future work, we aim to extend this formulation to fixed velocity regions and finite difference scale (that do not vary with temperature), making it applicable to plasma collisions and relaxation processes at different temperatures, as well as collisions of two or more plasmas with large mass ratios.

\appendix
%\section{Appendix}\label{sec:appendix}

\section{Momentum conservation numerical scheme}
\label{app:mom_conserv_scheme}

The Vlasov–Amp\`ere–collision (VAC) system is equivalent to the Vlasov–Poisson–collision (VPC) system when the charge continuity equation $\rho_{t} + \nabla_{\bm{x}} \cdot \bm{J} = 0$ holds, while the Amp\`ere solver only conserves mass and energy, rather than the momentum.
The VPC is written as
% We only consider the Vlasov equation here, since the collision naturally conserves mass, momentum and energy.
\begin{equation}\label{eq:VP}
\begin{aligned}
    \partial_{t} f + v_{x} \partial_{x} f + E(x,t)\,\nabla_{v_{x}} f = C[f], \qquad (x,\bm{v}) \in \Omega_{x} \times \Omega_{\bm{v}}, \\
    -\partial_{xx}\phi = \rho' := \lambda_{D}^{-2}(\rho - \bar{\rho}), \qquad \rho(x,t) = \int_{\Omega_{\bm{v}}} f(x,\bm{v},t)\,\mathrm{d}^{3}\bm{v} .
\end{aligned}
\end{equation}

We propose the fully discrete Vlasov–Poisson solver to conserve mass and momentum with one-order numerical error on total energy.
The discretization is 
\begin{equation}
\begin{aligned}
    & \dfrac{f_{i,\bm{j}}^{n+1} - f_{i,\bm{j}}^{n}}{\Delta t} + v_{j_{x}} D_{x}^{up} f_{i,\bm{j}}^{n} + E_{i}^{n} D_{v}^{cen} f_{i,\bm{j}}^{n} = 0 , \\
    & D_{x}^{up} f_{i,\bm{j}}^{n} = \begin{cases}
        \dfrac{f_{i,\bm{j}}^{n} - f_{i-1,\bm{j}}^{n}}{\Delta x}, & v_{\bm{j}} > 0 , \\
        \dfrac{f_{i+1,\bm{j}}^{n} - f_{i,\bm{j}}^{n}}{\Delta x}, & v_{\bm{j}} < 0 ,
    \end{cases} \\
    & D_{v}^{cen} f_{i,\bm{j}}^{n} = \dfrac{f_{i,\bm{j}+1_{x}}^{n} - f_{i,\bm{j}-1_{x}}^{n}}{2 \Delta v} ,
\end{aligned}
\end{equation}
where we adopt the upwind scheme in the spatial advection and the central flux scheme in the velocity advection term.

\begin{proposition}
    This discretization of the Vlasov-Poisson equation preserves the discrete mass and momentum, while and kinetic energy s not conserved.
    % and satisfies a discrete entropy inequality under the CFL condition $\frac{\Delta t}{\Delta x}\max |v_x|\le 1$.
\end{proposition}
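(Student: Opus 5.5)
We prove the statement by direct discrete summation. Define $M^{n}$ and $\bm{P}^{n}$ as in Eq. \eqref{eq:MPE}. From the scheme,
\[
\dfrac{M^{n+1}-M^{n}}{\Delta t} = -\Delta x\Delta v\sum_{i,\bm{j}}\Bigl(v_{j_x}D_{x}^{up}f_{i,\bm{j}}^{n} + E_{i}^{n}D_{v}^{cen}f_{i,\bm{j}}^{n}\Bigr),
\]
and similarly for $\bm{P}^{n}$ with the weight $\bm{v}_{\bm{j}}$ inserted. The collision step is treated separately: by the previous Proposition it conserves mass, momentum and kinetic energy pointwise in $x$, so we only need the advection part.

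\textbf{Spatial transport term.} For fixed $\bm{j}$, $v_{j_x}$ has a fixed sign, so $D_{x}^{up}$ is a one-sided difference in $i$ with a $\bm{j}$-dependent but $i$-independent stencil. Hence $\sum_i D_x^{up} f_{i,\bm{j}}^n$ telescopes to zero under periodic boundary conditions in $x$. Multiplying by any $i$-independent weight $\phi_{\bm{j}}\in\{1,\bm{v}_{\bm{j}},|\bm{v}_{\bm{j}}|^2\}$ preserves this, so the transport term contributes nothing to mass, momentum or kinetic energy.

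\textbf{Field term, mass.} For fixed $i$ and $(j_y,j_z)$, $\sum_{j_x}(f_{i,\bm{j}+1_x}-f_{i,\bm{j}-1_x})=0$ by telescoping, using $f\approx0$ at $\partial\Omega_v$. This gives mass conservation.

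\textbf{Field term, momentum.} Only the $v_x$-component is nontrivial, since $v_{j_y},v_{j_z}$ are constant in $j_x$ and the previous argument applies to them. Summation by parts gives
\[
\sum_{j_x} v_{j_x}\bigl(f_{\bm{j}+1_x}-f_{\bm{j}-1_x}\bigr) = -\sum_{j_x}\bigl(v_{j_x+1}-v_{j_x-1}\bigr)f_{\bm{j}} = -2\Delta v_x\sum_{j_x}f_{\bm{j}}.
\]
The $x$-momentum therefore changes by
\[
\Delta t\,\Delta x\sum_i E_i^n\rho_i^n,\qquad \rho_i^n=\Delta v\sum_{\bm{j}}f_{i,\bm{j}}^n.
\]
This is the main step. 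We use the Poisson equation $-\partial_{xx}\phi=\lambda_D^{-2}(\rho-\bar\rho)$ with $E=-\partial_x\phi$, discretized by central differences: $E_i=-(\phi_{i+1}-\phi_{i-1})/(2\Delta x)$ and the standard three-point Laplacian. Then
\[
\sum_i E_i\bar\rho = 0
\]
by telescoping, and
\[
\sum_i E_i(\rho_i-\bar\rho) = \lambda_D^{2}\sum_i \dfrac{\phi_{i+1}-\phi_{i-1}}{2\Delta x}\,\dfrac{\phi_{i+1}-2\phi_i+\phi_{i-1}}{\Delta x^2}.
\]
Expanding, this sum is of the form $\sum_i(a_{i+1}-a_i)$ with $a_i=\phi_i\phi_{i-1}$ and squared-difference terms, so it vanishes periodically. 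This discrete identity $\sum_i E_i\rho_i'=0$, the analogue of $\int E\,\partial_xE=0$, is the obstacle. I would first verify it by writing $\rho'_i$ as $\lambda_D^{-2}\bigl(E_{i+1/2}-E_{i-1/2}\bigr)/\Delta x$ in compatible staggered form. If the paper's Poisson discretization differs, I would adapt the averaging of $E$ so that the identity holds.

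\textbf{Kinetic energy.} Summation by parts with weight $|\bm{v}_{\bm{j}}|^2$ gives
\[
\sum_{j_x} v_{j_x}^2\bigl(f_{\bm{j}+1_x}-f_{\bm{j}-1_x}\bigr) = -4\Delta v_x\sum_{j_x}v_{j_x}f_{\bm{j}}.
\]
The kinetic energy therefore changes by
\[
\Delta t\sum_i E_i^n J_i^n\,\Delta x,
\]
which is generally nonzero. Unlike the Ampère scheme, there is no field-energy update matching this term at the same time level, so exact conservation fails. With the forward-Euler time stepping, the mismatch is only first order in $\Delta t$, which is the claim that the kinetic energy is not conserved.
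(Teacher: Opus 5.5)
Your proposal is correct and follows the paper's proof. Both use telescoping in $i$ for the upwind transport term and vanishing boundary values in $v_x$ for mass, and both use summation by parts to reduce the momentum change to $\Delta t\,\Delta x\sum_i E_i^n\rho_i^n$ and the kinetic-energy change to $\Delta t\,\Delta x\sum_i E_i^n J_i^n$. Your check that $\sum_i E_i\rho_i=0$ for the central-difference Poisson solve (writing $E_i=\tfrac12(E_{i+1/2}+E_{i-1/2})$, so that $\sum_i E_i(\rho_i-\bar\rho)$ is a multiple of $\sum_i(E_{i+1/2}^2-E_{i-1/2}^2)$) supplies a step the paper only asserts; your closing first-order total-energy remark, which the paper derives from $E_i^{n+1}-E_i^n=-\lambda_D^{-2}J_i^n\Delta t+\cO(\Delta x\Delta t)$, is merely asserted, but it lies outside the statement.
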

\begin{proof}
The discrete mass, momentum and energy are defined in Eq. \eqref{eq:MPE},
% \begin{equation}
%     \begin{aligned}
%         M^{n} =& \Delta x \Delta v \sum_{i,\bm{j}} f_{i,\bm{j}}^{n}, ~~ &\bm{P}^{n} =& \Delta x \Delta v \sum_{i,\bm{j}} \bm{v}_{\bm{j}} f_{i,\bm{j}}^{n}, \\
%         \mathcal{E}_{K}^{n} =& \Delta x \Delta v \sum_{i,\bm{j}} \dfrac{\bm{v}_{\bm{j}}^{2}}{2} f_{i,\bm{j}}^{n}, ~~ &\mathcal{E}_{P}^{n} =& \Delta x \sum_{i} \dfrac{\lambda_{D}^{2}}{2} |E_{i}^{n}|^{2} , \\
%         \mathcal{E}^{n} =& \mathcal{E}_{K}^{n} + \mathcal{E}_{P}^{n}, ~~ &S^{n} =& - \Delta x \Delta v \sum_{i,\bm{j}} f_{i,\bm{j}}^{n} \log f_{i,\bm{j}}^{n} .
%     \end{aligned}
% \end{equation}

\paragraph{Mass conservation}

\begin{equation}
    \dfrac{M^{n+1}-M^n}{\Delta t} = - \Delta x \Delta v \sum_{\bm{j}} v_{j_{x}} \sum_{i} D_{x}^{up} f_{i,\bm{j}}^{n} - \Delta x \Delta v \sum_{i} E_{i}^{n} \sum_{\bm{j}} D_{v}^{cen} f_{i,\bm{j}}^{n} = 0 ,
\end{equation}
where the first part vanishes due to telescoping cancellation, and the second part vanishes due to zero boundary condition.
% $f_{i,\bm{j}}^{n} = 0$ at boundary.

\paragraph{Momentum conservation}

\begin{equation}
\begin{aligned}
    \dfrac{\bm P^{n+1}-\bm P^n}{\Delta t} &= - \Delta x \Delta v \sum_{\bm{j}} v_{\bm{j}} v_{j_{x}} \sum_{i} D_{x}^{up} f_{i,\bm{j}}^{n} - \Delta x \Delta v \sum_{i} E_{i}^{n} \sum_{\bm{j}} v_{\bm{j}} D_{v}^{cen} f_{i,\bm{j}}^{n} \\
    &= \Delta x \sum_{i} E_{i}^{n} \rho_{i}^{n} = 0 ,
\end{aligned}
\end{equation}
where $\rho_{i}^{n} = \Delta v \sum_{\bm{j}} f_{i,\bm{j}}^{n}$.
The first term vanishes due to telescoping cancellation, and the second equation vanishes when we use FFT or central difference scheme on the Poisson equation.

\paragraph{Energy change}

\begin{equation}
    \dfrac{\mathcal E_K^{n+1}-\mathcal E_K^n}{\Delta t} = \Delta x \sum_{i} E_{i}^{n} J_{i}^{n},
\end{equation}
where $J_{i}^{n} = \Delta v \sum_{\bm{j}} \bm{v}_{\bm{j}} f_{i,\bm{j}}^{n}$, and it is similar in the momentum derivation.

The potential energy change is 
\begin{equation}
    \dfrac{\mathcal{E}_{P}^{n+1}-\mathcal{E}_{P}^n}{\Delta t} = \dfrac{\lambda_{D}^{2} \Delta x}{2 \Delta t} \sum_{i} (E_{i}^{n+1} + E_{i}^{n}) (E_{i}^{n+1} - E_{i}^{n}) ,
\end{equation}
and $E_{i}^{n+1} - E_{i}^{n} = - \lambda_{D}^{-2} J_{i}^{n} \Delta t + \cO(\Delta x \Delta t)$,
so that 
\begin{equation}
\begin{aligned}
    \dfrac{\mathcal{E}_{P}^{n+1}-\mathcal{E}_{P}^{n}}{\Delta t} =& \dfrac{\lambda_{D}^{2} \Delta x}{2 \Delta t} \sum_{i} [2E_{i}^{n} - \lambda_{D}^{-2} J_{i}^{n} \Delta t + \cO(\Delta x \Delta t)] [- \lambda_{D}^{-2} J_{i}^{n} \Delta t + \cO(\Delta x \Delta t)] \\
    =& - \Delta x \sum_{i} E_{i}^{n} J_{i}^{n} + \sum_{i} \dfrac{\lambda_{D}^{-2} \Delta t \Delta x}{2} |J_{i}^{n}|^{2} + \cO(\Delta x) + \cO(\Delta x \Delta t) .
\end{aligned}
\end{equation}

Therefore, there is a first-order time error in the total energy $\frac{\mathcal{E}_{P}^{n+1}-\mathcal{E}_{P}^{n}}{\Delta t} = \cO(\Delta t)$.
% because in the central-difference Poisson
% \begin{equation*}
% \begin{aligned}
%     (\rho_{i}^{n+1} - \rho_{i}^{n}) / \Delta t = - (J_{i}^{n} - J_{i-1}^{n}) / \Delta x + \cO(\Delta x) = - D_{x} J_{i}^{n} + \cO(\Delta x), \\
%     D_{x} (E_{i}^{n+1} - E_{i}^{n}) = \lambda_{D}^{-2} (\rho_{i}^{n+1} + \rho_{i-1}^{n+1} - \rho_{i}^{n} - \rho_{i-1}^{n})/2
% \end{aligned}
% \end{equation*}

% \paragraph{Entropy production.}
% Under the CFL condition $\frac{\Delta t}{\Delta x}|v_{j_x}|\le 1$, the update for each fixed velocity index $\bm j$ can be written as a convex combination
% \begin{equation*}
%     f_{i,\bm j}^{n+1} = \alpha_{i-1} f_{i-1,\bm j}^n + \alpha_i f_{i,\bm j}^n + \alpha_{i+1} f_{i+1,\bm j}^n, \quad \alpha_k\ge 0,\; \sum_k \alpha_k=1.
% \end{equation*}
% For any concave function $\eta$, Jensen's inequality implies
% \begin{equation*}
%     \eta(f_{i,\bm j}^{n+1}) \geq \alpha_{i-1}\eta(f_{i-1,\bm j}^n) + \alpha_i\eta(f_{i,\bm j}^n) + \alpha_{i+1}\eta(f_{i+1,\bm j}^n).
% \end{equation*}
% Summing over $i$ and $\bm j$ and choosing $\eta(f)=-f\log f$ gives the discrete entropy production.
% \end{proof}

\end{proof}

\section{Properties and construction of the inhomogeneous collision kernel}
\label{app:prop}

The collision operator in Eq. \eqref{eq:collision_metripletic} is
\begin{equation*}
    \begin{aligned}
        C[f] = \nabla_{\bm{v}} \cdot \int \bm{\omega}(\bm{v}, \bm{v}'; \rho, T) \left[ f(\bm{v}') \nabla f(\bm{v}) - f(\bm{v}) \nabla' f(\bm{v}') \right] \mathrm{d}\bm{v}', \\
        \rho = \int f \mathrm{d}^{3} \bm{v}, ~~ \bar{\bm{v}} = \int \bm{v} f ~ \mathrm{d}^{3} \bm{v} / \rho, ~~ T = \int \dfrac{(\bm{v} - \bar{\bm{v}})^{2}}{2} f ~ \mathrm{d}^{3} \bm{v} / \rho ,
    \end{aligned}
\end{equation*}
where the collision kernel $\bm{\omega}(\bm{v}, \bm{v}'; \rho, T)$ needs to be symmetric positive semi-definite and satisfies the conditions of variable exchange symmetry, rotational symmetry, and zero projection for all cases of $(\rho, T)$:
\begin{equation}\label{eq:kernel_conditions}
    \begin{aligned}
        \bm{\omega}(\mathcal{U}\bm{v},\mathcal{U}\bm{v}') &= \mathcal{U}\bm{\omega}(\bm{v},\bm{v}')\mathcal{U}^T,~  \mathcal{U} \in {\rm SO}(3) , \\
        \bm{\omega}(\bm{v},\bm{v}') &= \bm{\omega}(\bm{v}',\bm{v}) , \\
        \bm{\omega}(\bm{v},\bm{v}')(\bm{v} -\bm{v}') &= \bm{0} , \\
        \bm{\omega}(\bm{v},\bm{v}') &= \bm{\omega}(-\bm{v},-\bm{v}') ,
    \end{aligned}
\end{equation}
such that the collision operator satisfies the conservation laws \eqref{eq:conservation_law}, H-theorem, and frame-indifference constraints \eqref{eq:frame_indifference}.
Then we employ a low-rank tensor representation of the anisotropic non-stationary kernel in \eqref{eq:SS} as 
\begin{equation}\label{eq:CM2}
\begin{aligned}
    \bm{\omega} (\bm{v}, \bm{v}'; \rho, T) =& g_{1}^{2} |\bm{\mathcal{P}}\bm{r}|^{2} \bm{\mathcal{P}} + (g_{2}^{2}-g_{1}^{2}) \bm{\mathcal{P}} \bm{r} \bm{r}^{T} \bm{\mathcal{P}} , \\
    g_{\ast}(\bm{v}, \bm{v}'; \rho, T) =& \sum_{j=1}^{J} L_{\ast}^{j}(|\bm{u}|; \rho, T) M_{\ast}^{j}(|\bm{v}|; \rho, T) N_{\ast}^{j}(|\bm{v}'|; \rho, T) ,
\end{aligned}
\end{equation}
which enables a convolution structure in the simulation, achieving $\cO(N \log(N))$ computational cost.
% the integral term preserves a convolution structure
% enables us to achieve efficient evaluation of the collision operator.
For example the first term in \eqref{eq:sim_C} is
\begin{equation}\label{eq:sim_I1}
    \begin{aligned}
        I_{1} =& \nabla \cdot \int g_{1}^{2} |\bm{\mathcal{P}}\bm{r}|^{2} \bm{\mathcal{P}} f(\bm{v}) f(\bm{v}') \left[ \nabla \log f(\bm{v}) - \nabla'\log f(\bm{v}') \right] \mathrm{d}\bm{v}' \\
        =& \sum_{j,k} \nabla \cdot 
        \left[ \{2|\bm{v}|^2,\bm{\mathcal{P}},1\}_{1}^{j,k} 
        +\{2,\bm{\mathcal{P}},|\bm{v}'|^{2}\}_{1}^{j,k} 
        +\{-1,|\bm{u}|^{2}\bm{\mathcal{P}},1\}_{1}^{j,k} \right. \\
        &\qquad\quad \left. 
        +\{-|\bm{v}|^{4},|\bm{u}|^{-2}\bm{\mathcal{P}},1\}_{1}^{j,k} 
        +\{ 2|\bm{v}|^{2},|\bm{u}|^{-2}\bm{\mathcal{P}},|\bm{v}'|^{2}\}_{1}^{j,k} 
        +\{-1,|\bm{u}|^{-2}\bm{\mathcal{P}},|\bm{v}'|^{4}\}_{1}^{j,k} \right] ,
    \end{aligned}
\end{equation}
where individual terms $\{ \bm{\alpha}(\bm{v}),\bm{\beta}(\bm{u}),\bm{\gamma}(\bm{v}')\}_{\ast}^{j,k}$ take the form
\begin{equation}
    \begin{aligned}
        \{ \bm{\alpha}(\bm{v}),\bm{\beta}(\bm{u}),\bm{\gamma}(\bm{v}')\}_{\ast}^{j,k} =& M_{\ast}^{j}(|\bm{v}|) M_{\ast}^{k}(|\bm{v}|) \bm{\alpha}(\bm{v}) \cdot \left[ \langle\bm{\beta}(\bm{u}), \bm{\gamma}(\bm{v}')f(\bm{v}') \rangle_{\ast}^{j,k} f(\bm{v}) \nabla \log f(\bm{v}) \right.\\
        & \qquad \qquad \qquad \qquad \qquad - \left. \langle \bm{\beta}(\bm{u}), \bm{\gamma}(\bm{v}') f(\bm{v}') \nabla' \log f(\bm{v}') \rangle_{\ast}^{j,k} f(\bm{v}) \right],
    \end{aligned}
\nonumber
\end{equation}
and $\langle \bm{\xi}(\bm{u}),\bm{\eta}(\bm{v}') \rangle_{\ast}^{j,k} $ represents the convolution structure, i.e.,  
\begin{equation*}
\langle \bm{\xi}(\bm{u}),\bm{\eta}(\bm{v}') \rangle_{\ast}^{j,k} = \int  L_{\ast}^{j}(|\bm{u}|) L_{\ast}^{k}(|\bm{u}|) \bm{\xi}(\bm{u}) \cdot N_{\ast}^{j}(|\bm{v}'|) N_{\ast}^{k}(|\bm{v}'|) \bm{\eta}(\bm{v}') \mathrm{d}\bm{v}',    
\end{equation*} 

% The integral term $I_2$ can be expressed as
% \begin{equation}\label{eq:I2}
%     \begin{aligned}
%         I_{2} &= \nabla \cdot \int g_{2}^{2} \bm{\mathcal{P}} \bm{r} \bm{r}^{T} \bm{\mathcal{P}} f(\bm{v}) f(\bm{v}') \left[ \nabla \log f(\bm{v}) - \nabla'\log f(\bm{v}') \right] \mathrm{d}\bm{v}' \\
%         &= \sum_{j,k} \nabla \cdot 
%         \left[ \{2\bm{v},\bm{u}^{T},1\}_{2}^{j,k} 
%         + \{4\bm{v},1,\bm{v}'^{T}\}_{2}^{j,k} 
%         + \{-1,\bm{u}\bm{u}^{T},1\}_{2}^{j,k} \right. \\
%         &\qquad\qquad \left. + \{-2,\bm{u},\bm{v}'^{T}\}_{2}^{j,k} 
%         + \{-2|\bm{v}|^{2}\bm{v},|\bm{u}|^{-2}\bm{u}^{T},1\}_{2}^{j,k} 
%         + \{-2|\bm{v}|^{2},|\bm{u}|^{-2}\bm{u},\bm{v}'^{T}\}_{2}^{j,k} \right. \\
%         &\qquad\qquad \left. 
%         + \{2\bm{v},|\bm{u}|^{-2}\bm{u}^{T},|\bm{v}'|^{2}\}_{2}^{j,k} 
%         + \{2,|\bm{u}|^{-2}\bm{u},|\bm{v}'|^{2}\bm{v}'^{T}\}_{2}^{j,k} 
%         + \{|\bm{v}|^{4},|\bm{u}|^{-4}\bm{u}\bm{u}^{T},1\}_{2}^{j,k} \right. \\
%         &\qquad\qquad \left. + \{-2|\bm{v}|^{2},|\bm{u}|^{-4}\bm{u}\bm{u}^{T},|\bm{v}'|^{2}\}_{2}^{j,k} 
%         + \{1,|\bm{u}|^{-4}\bm{u}\bm{u}^{T},|\bm{v}'|^{4}\}_{2}^{j,k} \right] , 
%     \end{aligned}
% \end{equation}

\section{Comparison with homogeneous kernel} % Dimensional analysis and characteristic parameters
\label{app:comp_homo}

The present generalized data-driven collision operator admits a non-stationary, symmetry-breaking form that captures the heterogeneous energy transfer arising from the many-body effects. This effect becomes crucial for predicting the plasma kinetics beyond the weakly coupled regime. We refer to Ref. \cite{zhao2025data} for detailed discussion. 

For illustration purpose, we further propose a simplified data-driven homogeneous collision kernel, named ``DD-Landau'', with the form
\begin{equation}
     \bm{\omega} (\bm{v}, \bm{v}'; \rho, T) = g_{0}^{2}(\rho,T) \boldsymbol{\mathcal{P}} / |\bm{u}| ,
\label{eq:DD_Landau}     
\end{equation}
where the scalar function $g_{0}(\rho,T)$ is constructed as a neural network, and trained in a similar way as DDCO. 
The ``DD-Landau'' model has a similar form to the traditional Landau equation, while the traditional Landau equation is invalid with a negative factor $\ln \Lambda$ with $\Lambda \sim \Gamma^{-3/2}$ for the present plasma condition $\Gamma > 1$.  

Figs. \ref{fig:app_slice_dw1} show that the predictions of the DD-Landau and the present DDCO in comparison with the full MD results. The prediction of the DD-Landau shows apparent discrepancy with the MD results. In contrast, the prediction of the present DDCO model shows good agreement. This result further verifies that the non-stationary, symmetry-breaking form of the DDCO model (see Eqs. \eqref{eq:kernel_form} and \eqref{eq:SS}) is crucial for encoding the unresolved many-body correlations for moderately coupled regime. 

\begin{figure}[H]
    \centering
    \includegraphics[width=0.9\textwidth]{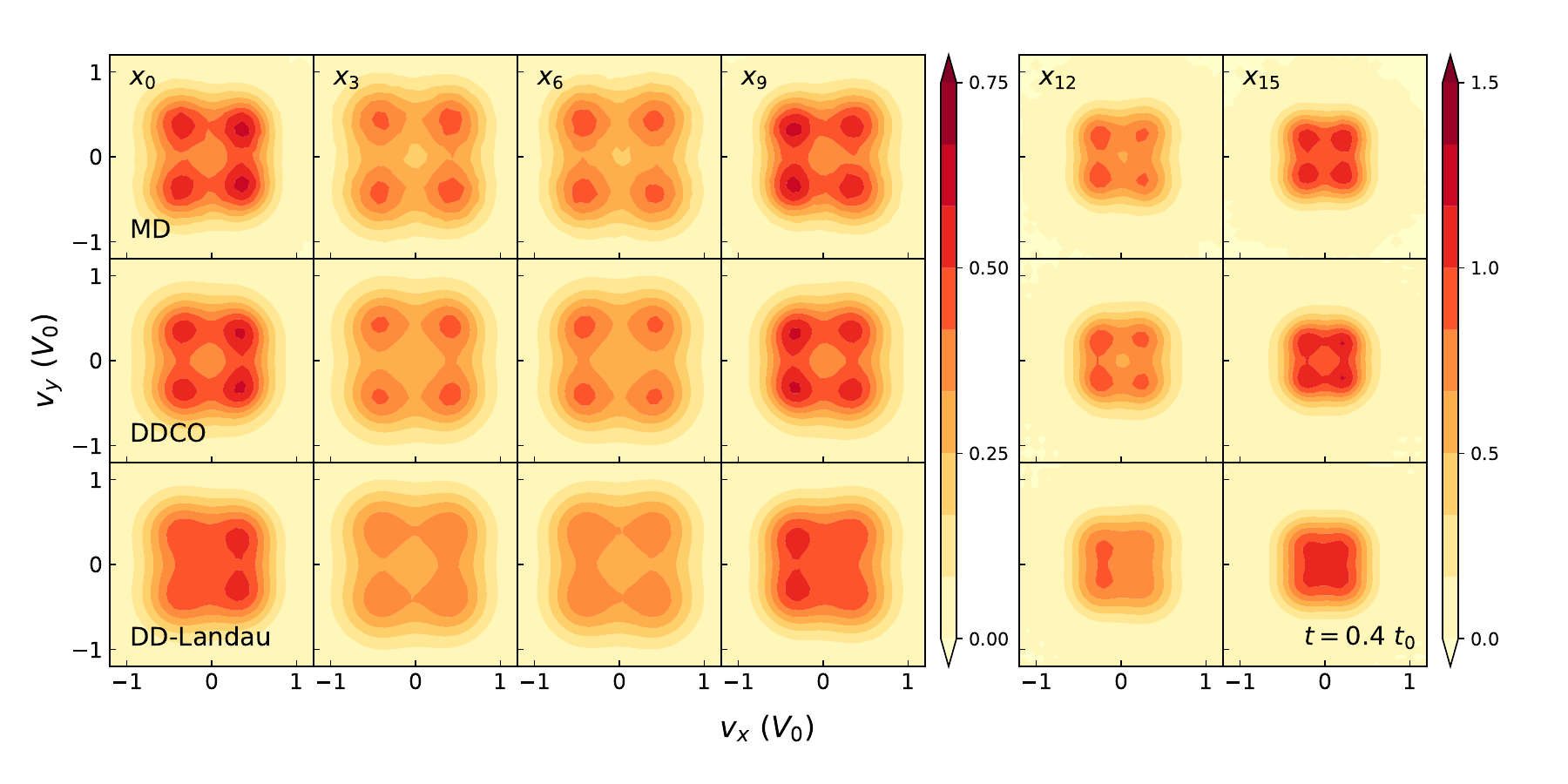}
    \caption{Comparison of the instantaneous velocity distribution on the $v_x$-$v_y$ plane at $t=0.4~t_{0}$ with the initial symmetric double-well distribution. The DD-Landau takes a homogeneous kernel in Eq. \eqref{eq:DD_Landau} similar to the standard Landau model, where the pre-factor $g_0^2(\rho, T)$ is directly trained from the micro-scale MD. The discrepancy between the DD-Landau and the full MD indicates the crucial role of the symmetry-breaking form of the present DDCO model.}
    \label{fig:app_slice_dw1}
\end{figure}

% \begin{figure}[H]
%     \centering
%     \includegraphics[width=0.9\textwidth]{./figs/ver5_4_5/slice_3.pdf}
%     \caption{Comparison of the instantaneous velocity distribution on the $v_x$-$v_y$ plane at $t=0.6~t_{0}$ with the initial asymmetric double-well distribution.}
%     \label{fig:app_slice_dw2}
% \end{figure}

% \begin{figure}[H]
%     \centering
%     \includegraphics[width=0.9\textwidth]{./figs/ver5_4_6/slice_3.pdf}
%     \caption{Comparison of the instantaneous velocity distribution on the $v_x$-$v_y$ plane at $t=0.6~t_{0}$ with the initial timodal distribution.}
%     \label{fig:app_slice_tri}
% \end{figure}

\section{Physical characteristic and settings} 
% Physical characteristic quantity analysis
% Dimensional analysis and characteristic parameters
\label{app:phys}

The Vlasov-Amp\`ere equation can be applied to describe OCP systems with Knudsen number $\Kn = \lambda / L_{0} = \cO(1)$, characterized by the Coulomb collision frequency $\nu \propto \frac{n_{0} q_{0}^{4} \ln \Lambda}{(4\pi \epsilon_{0})^{2} m_{0}^{2} v_{th}^{3}}$, with Coulomb logarithm $\ln \Lambda$, thermal velocity $v_{th} = \sqrt{k_{B} T/m}$, characteristic length $L_{0}$, and mean free path $\lambda \sim v_{th} / \nu$.
So that $\Kn \propto T^{2}/n_{0} L_{0}$. % for plasma systems
% mean free path $\lambda = 1/(\sqrt{2}\pi d^{2}n)$ and characteristic length $L_{0}$ for neutral dilute gas systems.

% further depends on the local particle density and temperature.
The prefactor in the Landau model $\gamma \propto \ln\Lambda$ depends on the Coulomb logarithm $\ln\Lambda = \ln (\lambda_{D} / b_{min})$, with $\lambda_{D} = \sqrt{\frac{\epsilon_{0} k_{B} T}{n_{0} q_{0}^{2}}}$ and $b_{min} = \frac{q_{0}^{2}}{4\pi \epsilon_{0} k_{B} T}$.
For a classical OCP, the Coulomb parameter scales as $\Lambda \sim \Gamma^{-3/2}$, where $\Gamma = \frac{q_e^2}{4\pi \epsilon_0 k_B T}(4\pi n/3)^{1/3}$, implying that weak coupling (the coupling parameter $\Gamma \ll 1$) corresponds to large Coulomb logarithms and justifies the Landau approximation, while moderate or strong coupling ($\Gamma = \cO(1)$) lies beyond its range of validity.
% The coupling parameter $\Gamma = \frac{q_e^2}{4\pi \epsilon_0 k_B T}(4\pi n/3)^{1/3}$.

% and numerical parameters used in simulations
In the numerical simulations, we choose $n_{0}=10^{24} ~ m^{-3}$, $q_{0} = 1.6 \times 10^{-19} ~ C$, $m_{0} = 1.67 \times 10^{-27} ~ kg$, $k_{B} T = 0.1 \sim 0.3 ~ eV$, and the characteristic length $L_{0} = 100 ~ \AA$, $V_{0} = 10^{4} ~ m/s$, 
so that characteristic parameters $t_{0} = 10^{-12} ~ s$, $E_{0} = 1.04 \times 10^{8} ~ V/m$, $T_{0} = 1.21 \times 10^{4} ~ K$ and $\lambda_{D} = 0.76$,
and derived parameters $v_{th} = 0.31 \sim 0.54 ~ V_{0}$, $Kn = 0.48 \sim 4.3$, and $\Gamma = 2.3 \sim 0.77$.
The one-component plasma system is across the weak and moderate coupling regime, where the effects of the particle correlation become non-negligible, and the Landau equation is insufficient to characterize the plasma kinetics.

% For $\Gamma \ll O(1)$, the plasma is in the weakly coupled regime (e.g., high-temperature, low-density), and the small-angle scattering and binary collisions are the dominant interaction mechanisms. The Landau equation provides a valid approximation that remains valid.
% For $\Gamma \sim O(1)$, the plasma is in the moderately coupled regime, and the effects of the particle correlation become non-negligible. 
% The Landau equation is insufficient to characterize the plasma kinetics due to the oversimplification of the collisional interactions in the form of a homogeneous kernel. 
% In this study, we focus on this challenging regime and aim to accurately model the plasma kinetics that accounts for the unresolved particle correlations in the form of a generalized heterogeneous non-stationary collisional kernel \eqref{eq:CM2}. 

In the training process, trajectories with a number of particles $N=10^{6}$, beginning with initial particle velocity distributions at some $(\rho,T)$ pairs, including uniform and the bi-Maxwellian distributions, are used to train the encoder functions in Eq. \eqref{eq:SS} and learn the generalized collision kernel.
The test functions $\psi_{k}(\bm{v}) = \exp[-(\boldsymbol{v}-\mu)^2/2\sigma^{2}]$, $\alpha~\boldsymbol{v}^2 \exp(-\boldsymbol{v}^2/2\sigma^{2})$, $\exp[-(\boldsymbol{v}^2 - \mu^{2})^2/2\sigma^{2}]$ with parameters $\mu$, $\sigma$, are designed to cover the relevant regions of velocity space distribution based on the system temperature.
The detailed information and discussion can be found in reference \cite{zhao2025fast}.

\section{MD settings for the 1D-3V system} 
% Physical characteristic quantity analysis
% Dimensional analysis and characteristic parameters
\label{app:MD}

We consider the 1D-3V plasma kinetics with the initial condition
\begin{equation*}
f(x, \bm v, t=0) = \rho_0  \tilde{f}(\bm{v}; x) 
\end{equation*}
where $\rho_0 = n_0 m_0$ with $n_0 = 10^{24} m^{-3}$ and $\tilde{f}(\bm{v}_{\bm{j}}; x_{i})$ takes various distributions with spatially dependent temperature, i.e.,  
\begin{equation}
\begin{aligned}
    &\int  \tilde{f}(\bm{v}; x) {\rm d} \bm v = 1 , 
    \quad \quad \qquad \quad \int \bm{v}  \tilde{f}(\bm{v}; x) {\rm d} \bm v = \bm 0, \\
    &\int \vert \bm{v} \vert^{2} \tilde{f}(\bm{v}; x) {\rm d} = T(x) , \quad~~
    T(x) = 0.2 + 0.1 \sin(2\pi/L_{x}) ~ \text{eV} ,
\end{aligned}    
\end{equation}
where $\tilde{f}(\bm{v}; x)$ takes the form including the local Maxwellian, bi-Maxwellian, symmetric double-well, and asymmetric double-well distributions as follows
\begin{equation}
\begin{aligned}
    \tilde{f}(\bm{v}; x) \sim& \exp\left( -\dfrac{m_{0} \bm{v}^{2}}{2k_{B}T(x)} \right) , \\
    \tilde{f}(\bm{v}; x) \sim& \exp\left( -\dfrac{m_{0} \bm{v}_{x}^{2}}{2k_{B}(T(x)/3)} - \dfrac{m_{0} \bm{v}_{y,z}^{2}}{2k_{B}(4T(x)/3)} \right) , \\
    \tilde{f}(\bm{v}; x) \sim& \prod_{i=x,y,z} \left[ \exp\left( -\dfrac{(\bm{v}_i-b_{0})^2}{2\sigma_{0}^2} \right) + \exp\left( -\dfrac{(\bm{v}_i+b_{0})^2}{2\sigma_{0}^2} \right) \right], \\
    & b_{0}^{2} = \dfrac{0.8 k_{B} T}{m_{0}}, ~ \sigma_{0}^{2} = \dfrac{0.2 k_{B} T}{m_{0}} , \\
    \tilde{f}(\bm{v}; x) \sim& \prod_{i=x,y,z} \left[ \sigma_{1}^{-1} \exp\left( -\dfrac{(\bm{v}_i-b_{1})^2}{2\sigma_{1}^2} \right) + \sigma_{2}^{-1} \exp\left( -\dfrac{(\bm{v}_i+b_{1})^2}{2\sigma_{2}^2} \right) \right], \\
    & b_{1}^{2} = \dfrac{0.875 k_{B} T}{m_{0}}, ~ \sigma_{1}^{2} = \dfrac{0.05 k_{B} T}{m_{0}}, ~ \sigma_{2}^{2} = \dfrac{0.2 k_{B} T}{m_{0}} .
\end{aligned}
\end{equation}

MD simulations of the 1D-3V system are performed with $N_{MD} = 1.6 \times 10^{7}$ particles confined in a box of side length $1024~\AA \times 125000~\AA \times 125000~\AA$, with the overall periodic boundary condition.
The initial setup of the system comprises an equilibrated configuration and randomly reassigned particle velocities according to the velocity distributions mentioned above.
In the simulation processes, the spatial extent in the x-direction is divided into $20$ grids.

\section*{Acknowledgments}
The work is supported in part by the National Science Foundation under Grant DMS-2110981, the Department of Energy under
Grant DOE-DESC0023164 and the ACCESS program through allocation MTH210005. The
authors also acknowledge the support from the Institute for Cyber-Enabled Research at Michigan
State University.

%\bibliographystyle{plain} % elsart-num, elsart-num-sort
%\bibliography{ref}

%\begin{thebibliography}{00}

%% For authoryear reference style
%% \bibitem[Author(year)]{label}
%% Text of bibliographic item

%\bibitem[Lamport(1994)]{lamport94}
%  Leslie Lamport,
%  \textit{\LaTeX: a document preparation system},
%  Addison Wesley, Massachusetts,
%  2nd edition,
%  1994.

%\end{thebibliography}

\end{document}